\documentclass[a4paper]{article}
\usepackage{graphicx}
\usepackage{amsmath}
\usepackage{amssymb}
\usepackage{amsthm}

\newtheorem{theorem}{Theorem}[section]

\newtheorem{lemma}[theorem]{Lemma}
\newtheorem{definition}[theorem]{Definition}

\newcommand{\name}[1]{\textsc{#1}}
 
\usepackage[all,british]{foreign}
\usepackage{microtype}

\usepackage{thmtools}
\usepackage{thm-restate}
\usepackage{xcolor}

\newcommand{\tbnd}{%
  \mathchoice%
  {\raisebox{4.5pt}{{\scriptsize$\circ$}}\mkern-2mu} 
  {\raisebox{4.5pt}{{\scriptsize$\circ$}}\mkern-2mu} 
  {\raisebox{4.1pt}{{\tiny$\circ$}}\mkern-3mu} 
  {\raisebox{4.5pt}{{\scriptsize$\circ$}}\mkern-2mu} 
}

\usepackage{schemata}


\bibliographystyle{plain}

\usepackage[ruled,longend,vlined,linesnumbered]{algorithm2e}
\usepackage{wrapfig}
\usepackage{complexity}
\usepackage{paralist}
\newcommand{\widthm}[1]{\ensuremath{\mathop\mathbf{#1}}\xspace}
\newcommand{\tw}{\widthm{tw}}
\newcommand{\pw}{\widthm{pw}}
\newcommand{\td}{\widthm{td}}
\newcommand{\bound}{\ensuremath{\mathop{bd}}}    

\title{Width, depth and space}

\author{
  Li-Hsuan Chen\\
  \texttt{clh100p@cs.ccu.edu.tw}
  \and
  Felix Reidl\\
  \texttt{fjreidl@ncsu.edu}
  \and
  Peter Rossmanith\\
  \texttt{rossmani@cs.rwth-aachen.de}
  \and
  Fernando S\'anchez~Villaamil\\
  \texttt{fernando.sanchez@cs.rwth-aachen.de}
}

\def\Nesetril{Ne\v{s}et\v{r}il\xspace}

\begin{document}

\maketitle

\begin{abstract}
  The width measure \emph{treedepth}, also known as vertex ranking,
centered coloring and elimination tree height, is a well-established
notion which has recently seen a resurgence of interest. Since graphs
of bounded treedepth are more restricted than graphs
of bounded tree- or pathwidth, we are interested in the algorithmic
utility of this additional structure.

On the negative side, we show with a novel approach that every dynamic programming algorithm
on treedepth decompositions of depth~$t$ cannot solve \name{Dominating Set} 
with $O\big((3-\epsilon)^t \cdot \log n\big)$ space for any~$\epsilon > 0$.
This result implies the same space lower bound for dynamic programming algorithms 
on tree and path decompositions. We supplement this result by showing a space lower
bound of $O\big((3-\epsilon)^t \cdot \log n\big)$ for \name{$3$-Coloring} and
$O\big((2-\epsilon)^t \cdot \log n\big)$ for \name{Vertex Cover}. This
formalizes the common intuition that dynamic programming algorithms on
graph decompositions necessarily consume a lot of space and complements
known results of the \emph{time}-complexity of problems restricted to
low-treewidth classes~\cite{lokshtanov2011known}.

We then show that treedepth lends itself to
the design of branching algorithms. This class of algorithms has in
general distinct advantages over
dynamic programming algorithms:\begin{inparaenum}[a)]
\item They use less space than algorithms based on dynamic programming,
\item they are easy to parallelize and
\item they provide possible solutions before terminating, i.e.~they
  can be used to derive heuristics by using preliminary solutions.
\end{inparaenum}
Specifically, we design two novel algorithms for \name{Dominating Set}
on graphs of treedepth~$t$:
a pure branching algorithm that runs in time $t^{O(t^2)}\cdot n$
and uses space $O(t^3 \log t + t \log n)$ and a hybrid of
branching and dynamic programming that achieves a running time of $O(3^t \log t \cdot
n)$ while using $O(2^t t \log t + t \log n)$ space. Algorithms for
\name{$3$-Coloring} and \name{Vertex Cover} with space complexity~$O(t \cdot \log n)$
and time complexity~$O(3^t \cdot n)$ and $O(2^t\cdot n)$, respectively, are included
for completeness.


\end{abstract}

\section{Introduction}

The notion of \emph{treedepth} has been introduced several times in the
literature under several different names. It was first formally studied in the
context of matrix factorization as \emph{minimum elimination
trees}~\cite{schreiber1982new}; Katchalski \etal~studied the same notion under
the name of \emph{ordered colorings}~\cite{KMS95}; Bodlaender et al.~used the
term \emph{vertex ranking}~\cite{BDJ98}. Recently, Ossona de Mendez and
\Nesetril brought the same concept to the limelight in the guise of
\emph{treedepth} in their book
\textit{Sparsity}~\cite{NOdM12}.
Here we use that definition of treedepth which one we consider
easiest to exploit algorithmically: The treedepth $\td(G)$ of a graph
$G$ is the minimal height of a forest $F$ such that $G$ is a subgraph
of the closure of $F$. The closure of a forest is the graph resulting
from adding edges between every node and its ancestors, \ie making
every path from root to leaf into a clique. A \emph{treedepth
  decomposition} is a forest witnessing this fact.

Algorithmically, treedepth is interesting since it is structurally
more restrictive than pathwidth.
Treedepth bounds the pathwidth and treewidth of a graph, \ie $\tw(G)
\leq \pw(G) \leq \td(G) - 1 \leq \tw(G) \cdot \log n$, where $\tw(G)$
and $\pw(G)$ are the treewidth and pathwidth of an $n$-vertex graph $G$
respectively. Furthermore, a path decomposition can be easily
computed from a treedepth decomposition. 
Not only are there problems that are 
$W[1]$-hard when parameterized by treewidth or
pathwidth, but fpt when parameterized by treedepth~\cite{gutin2014structural};
low treedepth can also be exploited to count
the number of appearances of different substructures, such as
matchings and small subgraphs, much more
efficiently~\cite{demaine2014structural,furer2014space}. 

It turns out, however, that the `classical' approach of dynamic programming (DP) on graph
decompositions cannot possibly provide us with faster algorithms, even if we 
restrict ourselves to treedepth decompositions. Worse, we show in the following 
that the \emph{space} complexity is necessarily high. To clarify, we consider 
algorithms that take as input a tree-, path- or treedepth decomposition of width~$s$
and size~$n$ and satisfy the following constraints:
\begin{enumerate}
    \item They pass a single time over the decomposition in a bottom-up fashion;
    \item they use~$O(f(s) \log n)$ space; and
    \item they do not modify the decomposition, including re-arranging it.
\end{enumerate}
While these three constraints might look stringent, they include pretty much
all dynamic programming algorithm for hard optimization problems on tree or
path decompositions. For that reason, we will refer to this type of algorithms
simply as \emph{DP algorithms} in the following.

In order to show the aforementioned
space lower bounds, we introduce a simple machine model that models DP algorithms
on treedepth decompositions and construct 
superexponentially large Myhill-Nerode families
that imply lower bounds for \name{Dominating Set}, \name{Vertex Cover}, and 
\name{$3$-Colorability} in this algorithmic model. These lower bounds hold as
well for tree and path decompositions and align nicely with the space complexity
of known DP algorithms: for every $\epsilon > 0$, no DP algorithm on such
decomposition of width/depth $s$ can use space bounded by
$O\big((3-\epsilon)^s \cdot \log n\big)$ for \name{$3$-coloring} or
\name{Dominating Set} nor $O\big((2-\epsilon)^s\cdot \log n\big)$ for
\name{Vertex Cover}. While probably not surprising,
we consider a formal proof for what previously were just widely held
assumptions valuable. The provided framework should easily extend to
other problems.

Consequently, any algorithmic benefit of treedepth over pathwidth and
treewidth must be obtained by non-DP means. We demonstrate that treedepth
allows the design of branching algorithms whose space consumption grows only
polynomially in the treedepth and logarithmic in the input size. Such 
space-efficient algorithms are quite easy to obtain for \name{$3$-Coloring} and
\name{Vertex  Cover} with running time $O(3^t \cdot n)$ and $O(2^t \cdot n)$,
respectively, and space complexity $O(t \cdot \log n)$. Our main contribution
on the positive side here are two linear-fpt algorithms for \name{Dominating Set}
which use much more involved branching rules on treedepth
decompositions. The first one runs in time $t^{O(t^2)} \cdot n$ and
uses space $O(t^3 \log t + t \cdot \log n)$. Compared to simple
dynamic programming, the space consumption is improved considerably,
albeit at the cost of a much higher running time. For this
reason, we design a second algorithm that uses 
a hybrid approach of  branching and dynamic programming, resulting 
in a competitive running time of $O(3^t \log t \cdot n)$ and
space consumption $O(2^t t \log t + t \log n)$. 
Both algorithms are amenable to heuristic improvements (see
the Conclusion for a discussion), making them good candidates 
for real-world application. 

While applying branching to treedepth seems natural, it is unclear whether it
could be applied to treewidth or pathwidth. Very recent work by Drucker,
Nederlof, and Santhanam suggests that, relative to a collaps of the polynomial
hierarchy, \name{Independent Set} restricted to low-pathwidth graphs cannot be
solved by a branching algorithm in fpt-time~\cite{NoBranchPersonal}.

The idea of using treedepth to improve space consumption is not
entirely novel.  F{\"u}rer and Yu demonstrated that it is possible
count matchings using polynomial space in the size of the
input~\cite{furer2014space} and a parameter closely related to the
treedepth of the input. Their algorithm achieves a small memory
footprint by using the algebraization framework developed by
Loksthanov and Nederlof~\cite{lokshtanov2010saving}. This technique
was also used to develop an algorithm for \name{Dominating Set} which
runs in time $3^t \cdot \poly(n)$ (non-linear) and uses space
$O(t \cdot \log n)$~\cite{pilipczuk-space}. In our opinion, this type
of algorithm has two disadvantages: On the theoretical side, the
dependency on the input size is at least $\Omega(n)$. A dependence of
$O(\log n)$, as provided by dynamic programming, would be
preferable. On the practical side, using the \emph{Discrete Fourier
  Transform} makes it hard to apply common algorithm engineering
techniques, like \emph{branch \& bound}, which are available for
branching algorithms.

\section{Preliminaries}

We write~$N_G[x]$ to denote the closed
neighbourhood of~$x$ in~$G$ and extends this notation to vertex
sets via~$N_G[S] := \bigcup_{x in S} N_G[x]$. Otherwise we
use standard graph-theoretic notation (see~\cite{Die10} for any
undefined terminology).
All our graphs are finite and simple and 
logarithms use base two. For reasons of space we defer some proofs
to the appendix and mark the respective claim with a~$\star$.
For sets~$A,B,C$ we write~$A \uplus B = C$ to express that~$A,B$
partition~$C$.

\begin{definition}[Treedepth]
  A \emph{treedepth decomposition} of a graph $G$ is a forest $F$ with
  vertex set $V(G)$, such that if $uv \in E(G)$ then either $u$ is an
  ancestor of $v$ in $F$ or vice versa. The \emph{treedepth} $\td(G)$
  of a graph~$G$ is the minimum height of any treedepth decomposition
  of $G$.
\end{definition}

\noindent
We assume that the input graphs are connected, which
allows us to presume that the treedepth decomposition is always a
tree. Furthermore, let $x$ be a node in some treedepth decomposition
$T$. We denote by $T_x$ the complete subtree rooted at $x$ and by
$P_x$ the set of ancestors of $x$ in $T$ (not including $x$). A \emph{subtree of~$x$}
refers to a subtree rooted at some child of~$x$.
The treedepth of a graph $G$ bounds its \emph{treewidth} $\tw(G)$ and
\emph{pathwidth} $\pw(G)$, \ie~$\tw(G)\leq \pw(G) \leq
\td(G)-1$~\cite{NOdM12}. For a definition of treewidth and pathwidth
see \eg Bodlaender~\cite{bodlaender1988dynamic}.

An \emph{$s$-boundaried graph}~$\tbnd G$ is a graph~$G$ with a set~$\bound(\tbnd G)
\subseteq V(G)$ of~$s$ distinguished vertices labeled~$1$ through~$s$,
called the \emph{boundary} of~$\tbnd G$. We will call vertices that are not
in~$\bound(\tbnd G)$ \emph{internal}.
By~$\tbnd \mathcal G_s$ we denote the class of all~$s$-boundaried graphs.
For $s$-boundaried graphs $\tbnd G_1$ and $\tbnd G_2$, we let the \emph{gluing}
operation~$\tbnd G_1 \oplus \tbnd G_2$ denote the $s$-boundaried graph obtained by first
taking the disjoint union of~$G_1$ and~$G_2$ and then unifying the boundary
vertices that share the same label\footnote{In the literature the result of gluing
is often an unboundaried graph. Our definition of gluing will be more convenient in
the following.}. 


\section{Space lower bounds for dynamic programming}

Lokshtanov, Marx, and Saurabh showed---assuming SETH---that
algorithms for \name{3-Coloring}, \name{Vertex Cover} and
\name{Dominating Set} on a tree decomposition of width~$w$ with
running time $O(3^w \cdot n)$, $O(2^w \cdot n)$ and $O(3^w w^2 \cdot
n)$, respectively, are basically optimal~\cite{lokshtanov2011known}. 
Their stated intent (as reflected in the title of the paper) was to
substantiate the common belief that known DP algorithms that solve
these problems where optimal. This is why we feel that a 
restriction to a certain type of algorithm is not necessarily inferior to
a complexity-based approach: indeed, most algorithms leveraging treewidth
\emph{are} dynamic programming algorithms or can be equivalently expressed as
such~\cite{bodlaender1988dynamic, bodlaender1997treewidth,
bodlaender2012fixed, borie1992automatic, borie2008solving,
scheffler1986dynamic}. Even before dynamic programming on tree-decompositions
became an important subject in algorithm design, similar concepts were already
used implicitly~\cite{bertele1973non}. The sentiment that the table size is
the crucial factor in the complexity of dynamic programming algorithms is
certainly not new (see \eg~\cite{van2009dynamic}), so it seems natural to
provide lower bounds to formalize this intuition. Our tool of choice will be a
family of boundaried graphs that are distinct under Myhill-Nerode equivalence.
The perspective of viewing graph decompositions as an `algebraic' expression
of boundaried graphs that allow such equivalences is well-established~\cite{bodlaender2012fixed,borie2008solving}.

First of all, we need to establish what we mean by \emph{dynamic programming}.
DP algorithms on graph decompositions work by visiting the bags/nodes of the
decomposition in a bottom-up fashion (a post-order depth-first traversal),
maintaining DP tables to compute a solution. For decision problems, these
algorithms only need to keep at most $\log n$ tables in memory at any given
moment. We propose a machine model with a read-only tape for the input that
can only be traversed once, which only accepts as input decompositions
presented in a valid order. This model suffices to capture known dynamic
programming algorithms on path, tree and treedepth decompositions. More
specifically, given a decision problem on graphs $\Pi$ and some well-formed
instance $(G,\xi)$ of~$\Pi$ (where~$\xi$ encodes the non-graph part of the
input), let $T$ be a tree, path or treedepth decomposition of $G$ of
width/depth~$k$. We fix an encoding~$\hat T$ of~$T$ that lists the separators provided
by the decomposition in the order they are normally visited in a dynamic
programming algorithm (post-order depth-first traversal of the bag/nodes of a
tree/path/treedepth decomposition) and additionally encodes the edges of~$G$
contained in a separator using $O(k \log k)$ bits per bag. Then
$(k,\hat T,\xi)$ is a well-formed instance of the \emph{DP decision problem
$\Pi_{\text{DP}}$}. Pairing DP decision
problems with the following machine model provides us with a way to model
DP computation over graph decompositions.

\begin{definition}[Dynamic programming TM]
  A \emph{DPTM} $M$ is a Turing machine with an input
  read-only tape, whose head moves only in one
  direction, and a separate working tape.
  It accepts as inputs well-formed instances of some DP
  decision problem.
\end{definition}

\noindent
Any single-pass dynamic programming algorithm that solves a DP decision problem
on tree, path or treedepth decompositions of width/depth~$k$ using tables of size~$f(k)$
that does not re-arrange the decomposition 
can be translated into a DPTM with a working tape of size $O(f(k) \cdot \log n)$.
This model does not suffice to rule
out algebraic techniques, since this technique, like branching, requires to
visit every part of the decomposition many times~\cite{furer2014space};
or algorithms that preprocess the decomposition first to find a suitable
traversal strategy.

The following notion of a \emph{Myhill-Nerode family} will provide us 
with the machinery to prove space lower-bounds for DPTMs and hence common
dynamic programming algorithms. Recall that~$\tbnd \mathcal G_s$ denotes
the class of all~$s$-boundaried graphs.

\begin{definition}[Myhill-Nerode family]
  A set~$\mathcal H \subseteq \tbnd \mathcal G_s \times \mathbf N$
  is a \emph{$s$-Myhill-Nerode family} for a DP-decision problem~$\Pi_{\text{DP}}$ if the following holds:
  \begin{enumerate}
    \item Every $s$-boundaried graph~$\tbnd H \in \tbnd \mathcal G_s$
          appears at most once in~$\mathcal H$ and if it does appear its size is bounded
          by~$|\tbnd H| \leq |\mathcal H| \polylog |\mathcal H|$.
    \item For every subset~$\mathcal I \subseteq \mathcal H$ there exists an
          $s$-boundaried graph $\tbnd G_{\mathcal I}$ with 
          $|\tbnd G_{\mathcal I}| \leq |\mathcal H| \polylog |\mathcal H|$ and an
          integer~$p_{\mathcal I}$ such that for every~$(H,q) \in \mathcal H$ it holds
          that
          \[
            (\tbnd G_{\mathcal I} \oplus \tbnd H, p_{\mathcal I}+q ) \in \Pi_{\text{DP}} 
             \iff (H,q) \not \in \mathcal I.
          \]
  \end{enumerate}
\end{definition}

\noindent
We define the \emph{size} of a Myhill-Nerode family~$\mathcal H$ as~$|\mathcal H|$
and its \emph{treedepth}  as
\[
  \td(\mathcal H) = \max_{(\tbnd H,\cdot) \in \mathcal H, \mathcal I \subseteq \mathcal H} 
  \td( \tbnd G_{\mathcal I} \oplus \tbnd H \oplus \tbnd K_s ).
\]
We can similarly define the treewidth and pathwidth of a family. Note that by
turning the boundary in the above definition into a clique we essentially
measure the treedepth under the assumption that all boundary vertices appear
as a path of length~$s$ on the top of the decomposition. The following lemma
still holds if we replace `treedepth' by `pathwidth' or `treewidth'.

\begin{lemma}\label{lemma:no-dptm}
  Let $\epsilon>0, c > 1$ and $\Pi$ be a DP decision problem such that for
  every $s$ there exists an $s$-Myhill-Nerode family $\mathcal H$ for $\Pi$ of size
  ${c^s/\poly(s)}$ and depth~$\td(\mathcal H) = s + o(s)$. Then no DPTM can
  decide $\Pi$ using space~$O((c-\epsilon)^k\log n)$, where $n$ is the size
  and $k$ the depth of the treedepth decomposition given as input.
\end{lemma}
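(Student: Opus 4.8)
The plan is to run a crossing-sequence (two-party communication) argument against the one-directional input head of the DPTM, using the Myhill-Nerode family as a fooling set. Suppose towards a contradiction that some DPTM~$M$ decides~$\Pi$ within space~$O((c-\epsilon)^k \log n)$. For a given~$s$ I take the family~$\mathcal H$ guaranteed by the hypothesis, and for each subset~$\mathcal I \subseteq \mathcal H$ and each~$(H,q) \in \mathcal H$ I consider the instance~$(\tbnd G_{\mathcal I} \oplus \tbnd H, p_{\mathcal I}+q)$. I fix an encoding of its treedepth decomposition in which the boundary is realised as a path of length~$s$ at the very top (which is exactly what the~$\tbnd K_s$ term in the definition of~$\td(\mathcal H)$ accounts for), so that the depth of the decomposition is~$\td(\mathcal H) = s + o(s) =: k$ and its size is~$n = O(|\mathcal H|\,\polylog|\mathcal H|)$, whence~$\log n = O(s)$.

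Next I would arrange the post-order encoding so that the entire description of~$\tbnd G_{\mathcal I}$ together with the summand~$p_{\mathcal I}$ is written as a \emph{prefix} of the tape, followed by the entire description of~$\tbnd H$ together with~$q$ and then the shared boundary path as a \emph{suffix}. Because the input head moves only forward, it crosses the prefix/suffix boundary exactly once; consequently the behaviour of~$M$ after the crossing is completely determined by the pair (suffix, work-tape configuration at the crossing), where by configuration I mean the control state, the work-head position, and the work-tape contents. Since the input-head position is pinned to the cut, the number of such configurations is~$2^{O(S)}$ with~$S = O((c-\epsilon)^k \log n) = O((c-\epsilon)^{s+o(s)} s)$.

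The counting step is then a pigeonhole over the~$2^{|\mathcal H|}$ subsets of~$\mathcal H$ against the~$2^{O(S)}$ possible crossing configurations. Because~$|\mathcal H| = c^s/\poly(s)$ while~$S = O((c-\epsilon)^{s+o(s)} s)$, the ratio~$|\mathcal H|/S$ is~$\Omega\!\big(2^{\,s\log(c/(c-\epsilon)) - o(s)}/\poly(s)\big)$, which tends to infinity as~$s$ grows; hence for all sufficiently large~$s$ the number~$2^{|\mathcal H|}$ of subsets exceeds the number~$2^{O(S)}$ of crossing configurations, and two distinct subsets~$\mathcal I_1 \neq \mathcal I_2$ drive~$M$ into the identical crossing configuration when reading their respective prefixes. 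Fixing an element~$(H,q)$ on which~$\mathcal I_1$ and~$\mathcal I_2$ differ, and noting that the suffix built from~$(H,q)$ is \emph{the same} in both runs (it depends only on~$\tbnd H$ and~$q$, not on~$\mathcal I$), I conclude that~$M$ returns the same verdict on~$(\tbnd G_{\mathcal I_1}\oplus \tbnd H, p_{\mathcal I_1}+q)$ and on~$(\tbnd G_{\mathcal I_2}\oplus \tbnd H, p_{\mathcal I_2}+q)$. This contradicts the defining property of a Myhill-Nerode family, which forces the two verdicts to be the opposite truth values~$[(H,q)\notin \mathcal I_1]$ and~$[(H,q)\notin \mathcal I_2]$.

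The step I expect to be the main obstacle is making the prefix/suffix split watertight, in particular guaranteeing that the suffix is byte-for-byte independent of~$\mathcal I$. This forces the additive parameter to be charged locally---$p_{\mathcal I}$ to the~$\tbnd G_{\mathcal I}$-side of the tape and~$q$ to the~$\tbnd H$-side---so that changing~$\mathcal I$ perturbs only the prefix; one also has to confirm that the post-order traversal genuinely lists all internal nodes of~$\tbnd G_{\mathcal I}$ before any node of~$\tbnd H$, and to dispose of the corner case in which~$M$ halts before reaching the cut (folded into the configuration count by treating a halted machine as a single terminal configuration). The remaining estimates are routine.
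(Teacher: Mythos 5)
Your proposal is correct and follows essentially the same route as the paper: order the encoding so that the $\mathcal I$-dependent part of the decomposition forms a prefix, pigeonhole the $2^{|\mathcal H|}$ subsets against the $2^{O((c-\epsilon)^{s+o(s)}\poly(s))}$ working-tape configurations at the cut, and contradict the defining property of the Myhill--Nerode family on an element of the symmetric difference. Your explicit handling of where $p_{\mathcal I}$ and $q$ are written on the tape is a point the paper glosses over, but it is a refinement of the same argument rather than a different one.
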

\begin{proof}
  Assume to the contrary that such a DPTM $M$ exists. Fix~$s$ and consider any
  subset~$\mathcal I \subseteq \mathcal H$ of the $s$-Myhill-
  Nerode family~$\mathcal H$ of~$\Pi$. By definition and our above
  assumptions, all graphs in~$\mathcal H$ 
  and the graph~$\tbnd G_{\mathcal I}$ have size at most
  \[
    |\mathcal H| \polylog |\mathcal H| = c^s \cdot \poly(s).
  \]
  For every $s$-boundaried graph~$\tbnd H$ contained in~$\mathcal H$, there
  exist treedepth decompositions for~$\tbnd G_{\mathcal I} \oplus \tbnd H$ of
  depth at most~$s + o(s)$ such that the boundary vertices~$\bound(\tbnd G_{
  \mathcal I})$ appear on a path of length~$s$ on the top of the decomposition.
  Hence, we can fix a treedepth decomposition~$T_{\mathcal I}$ of~$G_{\mathcal I}$ with
  exactly that property and chose a treedepth decomposition of~$\tbnd
  G_{\mathcal I} \oplus \tbnd H$ that contains~$T_{\mathcal I}$.
  Moreover, we chose an encoding of the resulting decomposition that lists
  the nodes of~$T_{\mathcal I}$ first.

  There are~$2^{|\mathcal H|} = 2^{c^s/\poly(s)}$ choices
  for~$\mathcal I$ but~$M$ only uses~$(c-\epsilon)^{s+o(s)} \cdot \poly(s +
  o(s)) = c^{s-o(s)}/\poly(s)$ space. Hence by the pigeonhole principle there exist
  graphs~$\tbnd G_{\mathcal I}, \tbnd G_{\mathcal J}$, $\mathcal I \neq
  \mathcal J \subseteq \mathcal H$ for which~$M$ is in the same state and
  has the same working-tape content after reading the nodes of the respective
  decompositions~$T_{\mathcal I}$ and~$T_{\mathcal J}$. Choose 
  $(\tbnd H,q) \in \mathcal I \bigtriangleup \mathcal J$. By definition, we have that
  \[
    (\tbnd G_{\mathcal I} \oplus \tbnd H, p_{\mathcal I} + q ) \in \Pi
    \iff 
    (\tbnd G_{\mathcal J} \oplus \tbnd H, p_{\mathcal J} + q ) \not \in \Pi
  \]
  but~$M$ will either reject or accept both inputs. Contradiction.
\end{proof}

\noindent 
The following space lower bounds all follow the same basic construction.
We define a problem-specific `state' for the vertices of a boundary set~$X$
and construct two boundaried graphs for it: one graph that enforces this
state in any (optimal) solution of the respective problem and one graph that
`tests' for this state by either rendering the instance unsolvable or increasing
the costs of an optimal solution. We begin with the easiest construction: 
a Myhill-Nerode family for \name{$3$-Coloring}.

\begin{figure}[b]
  \vspace{-15pt}
  \begin{center}
    \includegraphics[width=0.53\textwidth]{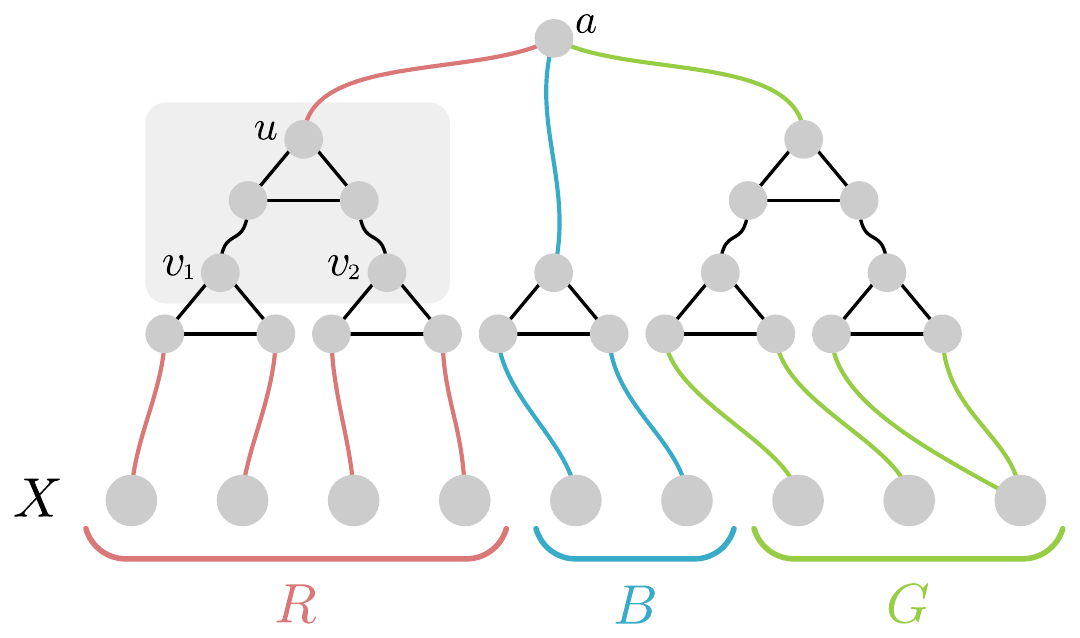}
  \end{center}
 \vspace{-15pt}
 \caption{The gadget~$\protect\tbnd \Gamma_{\! \mathcal X}$ for~$\mathcal X = \{R,G,B\}$.\label{fig:coloring-gadget}}
\end{figure}

\begin{theorem}\label{thm:lower-bound-3-coloring}
  For every $\epsilon>0$, no DPTM solves
  \name{3-Coloring} on a tree, path or treedepth decomposition of
  width/depth $k$ with space bounded by $O((3-\epsilon)^k\log n)$.
\end{theorem}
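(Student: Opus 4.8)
The plan is to apply Lemma~\ref{lemma:no-dptm} with $c = 3$, so it suffices to construct, for every $s$, an $s$-Myhill-Nerode family $\mathcal H$ for $\name{3-Coloring}_{\text{DP}}$ of size $3^s/\poly(s)$ and treedepth $s + o(s)$. The natural notion of ``state'' for a boundary set $X = \{x_1,\dots,x_s\}$ is the colour each boundary vertex receives in a proper $3$-colouring, i.e.\ a function $X \to \{R,G,B\}$. Since colourings are only meaningful up to permutation of the three colours, I expect the usable states to be colourings in some canonical form; to keep the count clean I would instead index the family by the $3^s$ (or $3^s/\poly(s)$ after discarding degenerate cases) colour patterns directly and realise each pattern by a concrete boundaried graph.

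\medskip
\noindent\textbf{The two gadget types.} For each colour pattern $\phi\colon X \to \{R,G,B\}$ I need an ``enforcer'' graph and a ``tester'' graph, matching the recipe described just before the theorem. The enforcer $\tbnd H_\phi$ should be a boundaried graph that is $3$-colourable and, in every proper $3$-colouring, forces the boundary vertices to realise exactly the pattern $\phi$ (up to the global colour symmetry). The family $\mathcal H$ will consist of these enforcers, each paired with the integer $0$ since \name{3-Coloring} has no numeric budget to track. The tester $\tbnd G_{\mathcal I}$, built from a chosen subset $\mathcal I \subseteq \mathcal H$, must be a boundaried graph such that gluing it onto $\tbnd H_\phi$ yields a $3$-colourable graph if and only if $\phi \notin \mathcal I$. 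The gadget $\tbnd\Gamma_{\!\mathcal X}$ of Figure~\ref{fig:coloring-gadget} is presumably the building block: for a single boundary vertex one attaches a small constant-size gadget (a triangle sharing edges appropriately) that is incompatible with a forbidden colour, and one takes the disjoint union of such single-vertex testers over all patterns in $\mathcal I$, connecting each to the shared boundary $X$. The key arithmetic point is that the condition ``$\phi \notin \mathcal I$'' must be expressible as ``no forbidden pattern is matched,'' which a disjoint collection of per-pattern testers achieves because a proper colouring of the glued graph exists iff the pattern on $X$ avoids every element of $\mathcal I$.

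\medskip
\noindent\textbf{Size and treedepth bookkeeping.} I would make each gadget of size $O(\poly(s))$ so that $\tbnd G_{\mathcal I}$, being a union of at most $|\mathcal H| = 3^s/\poly(s)$ such gadgets over the common boundary, has size at most $|\mathcal H|\polylog|\mathcal H|$, as required by condition~1 and the size bound in condition~2 of the definition. For the treedepth bound I place the $s$ boundary vertices on a path at the top of the decomposition (exactly the scenario the family's treedepth definition anticipates, since it measures $\td(\tbnd G_{\mathcal I} \oplus \tbnd H \oplus \tbnd K_s)$ with the boundary turned into a clique); each small gadget then hangs below this path, contributing only $o(s)$ additional depth because each individual gadget has constant or $\polylog$ size and attaches to only a bounded portion of the boundary. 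Verifying that the total depth is $s + o(s)$, rather than $s + \Theta(s)$, is where care is needed.

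\medskip
\noindent\textbf{Main obstacle.} The hard part is the simultaneous satisfaction of the two competing requirements in condition~2: the tester $\tbnd G_{\mathcal I}$ must correctly detect \emph{every} pattern in $\mathcal I$ (a set of up to $3^s/\poly(s)$ patterns), yet its treedepth must stay at $s + o(s)$ rather than blowing up with the number of gadgets glued on. Because treedepth, unlike pathwidth, is sensitive to long paths, I must ensure the many per-pattern gadgets do not stack into a long chain below the boundary but instead branch out broadly so their depths do not add. Reconciling the global colour-permutation symmetry of \name{3-Coloring} with the requirement that distinct patterns give genuinely distinguishable behaviour (so that no two subsets $\mathcal I \neq \mathcal J$ collapse) is the other delicate point, and it is what ultimately justifies the base $c = 3$ rather than a smaller constant.
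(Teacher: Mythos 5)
Your high-level skeleton matches the paper's: apply Lemma~\ref{lemma:no-dptm} with $c=3$, index the family by colour patterns (the paper uses unordered three-partitions of $X$, giving $3^s/6$ members), realise each by an enforcer graph, build $\tbnd G_{\mathcal I}$ as a gluing of one tester per pattern in $\mathcal I$, and keep the boundary as a path on top of the decomposition so the testers hang below it in parallel. The enforcer is easy (a triangle $v_R,v_G,v_B$ with $v_C$ joined to $X\setminus C$ forces each class to be monochromatic), and your observation that the testers must branch out rather than stack is correct --- each tester only shares the boundary with the others, so the depths do not add.

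The genuine gap is in the tester itself. A tester for a pattern $\mathcal X$ must make the glued graph uncolourable exactly when the boundary colouring realises \emph{all} of $\mathcal X$ --- a conjunction over the $s$ boundary vertices. Your proposed construction, a disjoint collection of constant-size single-vertex gadgets each ``incompatible with a forbidden colour,'' computes the wrong Boolean connective: it would render the graph uncolourable as soon as \emph{any one} vertex agrees with the forbidden pattern, so it would reject essentially every pattern, not just $\mathcal X$. It also presupposes an absolute notion of ``the colour $R$'' at a single vertex, which does not exist since proper $3$-colourings are closed under permuting colours; all constraints must be relative (``these two vertices have the same colour''). The paper resolves both problems with the circuit gadget of Figure~\ref{fig:coloring-gadget}: if $v_1,v_2$ get the same colour then $u$ is forced to that colour, and otherwise $u$ is free. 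Arranging these gadgets in a binary tree of depth $O(\log s)$ over each colour class $C$ yields a top vertex that is forced to carry the common colour of $C$ precisely when $C$ is monochromatic; a single apex vertex adjacent to the three tree-tops is then uncolourable exactly when the boundary induces the partition $\mathcal X$. This logarithmic-depth aggregation is the step your proposal is missing, and it is also what keeps the extra depth at $O(\log s) = o(s)$.
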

\begin{proof}
  For any~$s$ we construct an $s$-Myhill-Nerode family~$\mathcal H$ as
  follows. Let~$X$ be the $s$ boundary vertices of all the boundaried graphs
  in the following. Then for every three-partition~$\mathcal X = \{R,G,B\}$
  of~$X$ we add a boundaried graph~$\tbnd H_{\mathcal X}$ to the
  family~$\mathcal H$ by taking a single triangle~$v_R,v_G,v_B$ and connect
  the vertices~$v_C$ to all vertices in~$X \setminus C$ for $C \in \{R,G,B\}$.
  Since instances of three-coloring do not need any additional parameter,
  we ignore this part of the construction of~$\mathcal H$ and implicitly assume
  that every graph in~$\mathcal H$ is paired with zero.

  To construct the graphs~$G_{\mathcal I}$ for~$\mathcal I \subset \mathcal
  H$, we will employ the
  \emph{circuit gadget}~$v_1,v_2,u$
  highlighted in Figure~\ref{fig:coloring-gadget}: note that if $v_1,v_2$
  receive the same color, then necessarily~$u$ must be colored the same. In
  every other case, the color of $u$ is arbitrary. Now for every three-
  partition~$\mathcal X = \{R,G,B\}$ of~$X$, we construct a \emph{testing
  gadget}~$\tbnd \Gamma_{\mathcal X}$ as follows: for~$C \in \{R,G,B\}$, we
  arbitrarily pair the vertices in $C$ and connect them via the circuit gadget
  (as $v_1,v_2$). If~$|C|$ is odd, we pair some vertex of~$C$ with itself. We
  then repeat the construction with all the $u$-vertices of those gadgets,
  resulting in a hierarchical structure of depth~$\sim \log |B_i|$ (\cf
  Figure~\ref{fig:coloring-gadget} for an exemplary construction). Finally, we
  add a single vertex~$a$ and connect it to the top vertex of the three
  circuits. Note that by the properties of the circuit gadget, the
  graph~$\tbnd \Gamma_{\mathcal X}$ is three-colorable iff the coloring of~$X$
  does \emph{not} induce the partition~$\cal X$. In particular, the
  graph~$\tbnd \Gamma_{\mathcal X} \oplus \tbnd H_{\mathcal X'}$ is 
  three-colorable iff~$\mathcal X \neq \mathcal X'$.

  Now for every subset~$\mathcal I \subseteq \mathcal H$ of graphs from the
  family, we define the graph~$
    \tbnd G_{\mathcal I} = \bigoplus_{\tbnd H_{\mathcal X} \in \mathcal H} 
    \Gamma_\mathcal{X}$.
  By our previous observation, we have
  that for every~$\tbnd H_\mathcal{X} \in \mathcal H$ the graph~$\tbnd G_{\mathcal I}
  \oplus \tbnd H_\mathcal{X}$ is 
  three-colorable iff~$\tbnd H_\mathcal{X} \not \in \mathcal I$. Furthermore,
  every composite graph has treedepth at most~$s + 4   \lceil \log s \rceil +
  4$ as witnessed by a decomposition whose top~$s$ vertices are the
  boundary~$X$. The graphs~$\tbnd G_{\mathcal I}$ for every~$\mathcal I
  \subseteq
  \mathcal H$ have size at most~$3^s \cdot 6s$---we conclude that $\mathcal H$
  is an $s$-Myhill-Nerode family of size~${3^s / 6}$ (the factor~$1/6$ accounts
  for the~$3!$ permutations of the partitions) and the claim follows from 
  Lemma~\ref{lemma:no-dptm}.
\end{proof}

\noindent
Surprisingly, the construction to prove a lower bound for
\name{Vertex Cover} is very similar to the one for
\name{$3$-coloring}. For that reason we defer the proof to the 
appendix.

\begin{theorem}
  For every $\epsilon>0$, no DPTM solves
  \name{Vertex Cover} on a tree, path or treedepth decomposition of
  width/depth $k$ with space bounded by $O((2-\epsilon)^k\log n)$.
\end{theorem}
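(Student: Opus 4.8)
The plan is to mirror the \name{3-Coloring} construction and build, for every~$s$, an $s$-Myhill-Nerode family~$\mathcal H$ for \name{Vertex Cover} whose ``states'' are \emph{all} subsets~$S \subseteq X$ of the boundary~$X$, so that~$|\mathcal H| = 2^s$, and then invoke Lemma~\ref{lemma:no-dptm} with~$c = 2$. The state is played by the intended trace~$\text{cover} \cap X = S$ of a minimum vertex cover on the boundary; since the natural DP table for \name{Vertex Cover} stores exactly the cheapest partial cover for each such trace, $c=2$ is the right base. For every~$S$ I add an \emph{enforcing gadget}~$\tbnd H_S$ that attaches to each~$x \in S$ a \emph{cherry} (two pendant leaves): covering the two cherry edges with the single vertex~$x$ beats covering them with both leaves, so every minimum cover of~$\tbnd H_S$ contains~$S$, and I pair~$\tbnd H_S$ with the parameter~$q_S = |S|$ to absorb this cost. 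I then design a \emph{testing gadget}~$\tbnd \Gamma_S$ with a baseline internal cost~$b$ that rises to~$b+1$ \emph{exactly} when the boundary trace equals~$S$, and for every~$\mathcal I \subseteq \mathcal H$ I set~$\tbnd G_{\mathcal I} = \bigoplus_{\tbnd H_S \in \mathcal I} \tbnd \Gamma_S$ and~$p_{\mathcal I} = |\mathcal I|\cdot b$.

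Granting the testing gadget, the equivalence follows as for \name{3-Coloring}. In~$\tbnd G_{\mathcal I} \oplus \tbnd H_S$ the cherries force~$S \subseteq \text{cover}$, a minimum cover realises the trace~$S$ exactly, and each present~$\tbnd \Gamma_{S'}$ contributes~$b$ unless~$S' = S$, in which case it contributes~$b+1$. Hence the minimum cover has size~$q_S + |\mathcal I|\,b + [\,S \in \mathcal I\,] = p_{\mathcal I} + q_S + [\,S \in \mathcal I\,]$, which is at most the budget~$p_{\mathcal I}+q_S$ iff~$S \notin \mathcal I$, exactly as the definition of a Myhill-Nerode family demands. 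Each~$\tbnd \Gamma_S$ spans~$O(s)$ vertices and hangs below the boundary with depth~$O(\log s)$, so placing~$X$ on a path at the top of the decomposition yields~$\td(\tbnd G_{\mathcal I} \oplus \tbnd H_S \oplus \tbnd K_s) \le s + O(\log s) = s + o(s)$, while the composite graphs have size~$O(2^s s) = |\mathcal H|\,\poly(s)$. With~$|\mathcal H| = 2^s$ (hence of size~$2^s/\poly(s)$) the hypotheses of Lemma~\ref{lemma:no-dptm} are met for~$c=2$, giving the claimed bound.

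The crux — and the only place where \name{Vertex Cover} genuinely differs from \name{3-Coloring} — is the construction of~$\tbnd \Gamma_S$, because the trace condition ``$\text{cover} \cap X = S$'' is a conjunction of two kinds of requirements whose detectability is \emph{asymmetric}. Detecting that some~$x \in S$ lies \emph{outside} the cover is easy: the edge from~$x$ into the gadget then forces its other endpoint into the cover, producing an ``in-cover'' signal. Detecting that some~$x \in X\setminus S$ lies \emph{inside} the cover is the hard direction, since putting a vertex into a cover can never \emph{increase} the cost elsewhere and so cannot be penalised directly; I would detect it only indirectly, exploiting that an in-cover vertex covers an incident gadget edge ``for free''. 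I would then OR both families of deviation signals through a balanced binary tree of constant-size vertex-cover OR-gadgets into a single line feeding an apex that is forced into the cover — paying the extra unit — precisely when \emph{no} deviation signal fires, i.e.\ when the trace is exactly~$S$; the logarithmic tree depth is what keeps the treedepth at~$s + o(s)$. I expect the main obstacle to be the delicate verification that a minimum cover cannot cheat the test by voluntarily adding vertices of~$X \setminus S$ to the cover, together with checking that the OR-tree realises its intended cost exactly. A clean way to sidestep the hard detection direction entirely is to restrict~$\mathcal H$ to the \emph{balanced} subsets~$|S| = s/2$: these form an antichain of size~$\binom{s}{s/2} = 2^s/\poly(s)$, so ``$S' \subseteq S$'' already forces~$S' = S$ and the easy out-of-cover detection alone suffices, while the family stays large enough for Lemma~\ref{lemma:no-dptm} with~$c=2$.
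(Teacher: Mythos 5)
Your fallback construction is, in essence, the paper's own proof: for each $A\subseteq X$ the paper takes an enforcer $\tbnd H_{\!A}$ consisting of a pendant matching on $A$ padded with isolated $K_2$s (paired with $q=s$), a tester $\tbnd \Gamma_{\!A}$ that runs the circuit-gadget OR-tree of Figure~\ref{fig:coloring-gadget} over $X\setminus A$ so that its top vertex $u'$ fits into the budget~$2\lambda$ iff some vertex of $X\setminus A$ lies in the cover, glues the testers for $\mathcal I$ behind a single apex $a$ with padding to equalize $\ell$, and invokes Lemma~\ref{lemma:no-dptm} with $c=2$ --- the same enforce/test/apex template you describe. The one place you deviate is exactly the place you flagged as delicate, and there your caution is better than the paper's execution: the paper keeps the full powerset family ($2^s$ sets) and detects only the ``in-cover'' direction, so its tester $\tbnd\Gamma_{\!A'}$ separates the enforced trace $A$ from $A'$ only when $A\not\subseteq A'$; for nested pairs $A\subsetneq A'$ with $\tbnd H_{\!A'}\in\mathcal I$ and $\tbnd H_{\!A}\notin\mathcal I$, no vertex of $X\setminus A'$ is covered for free, buying one costs as much as buying $a$, and the gadget forces the extra unit anyway, so the claimed equivalence is not justified as written for such pairs. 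Your restriction to the balanced antichain $|S|=s/2$ removes precisely these pairs, costs only a $\poly(s)$ factor in $|\mathcal H|$ (which Lemma~\ref{lemma:no-dptm} tolerates for $c=2$), and lets the one-sided, monotone test suffice; you therefore do not need the two-sided OR-tree or the ``voluntary inclusion'' analysis of your primary plan, which you correctly identified as the unresolved obstacle.
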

\begin{proof}
  Let $s=|X|$ be the size of the boundary.
  For every subset~$A \subseteq X$ we construct a
  graph~$\tbnd  H_{\!A}$ for~$\cal H$ which consists of the boundary and a
  matching to~$A$ plus~$s - |A|$ isolated $K_2$s as padding
  and add~$(H_{\!A}, s)$ to~$\mathcal H$.

  Consider~$\mathcal I \subseteq \mathcal H$. We will again use
  the circuit gadget highlighted in Figure~\ref{fig:coloring-gadget}
  to construction~$\tbnd G_{\mathcal I}$. We assign a budget of two
  vertices for every such circuit gadget. Note that if one of~$v_1, v_2$ 
  is in the vertex cover, then we can include the top vertex~$u$ into the 
  cover without exceeding this budget. Otherwise, $u$ cannot be included
  within the budget. For a set~$A \subseteq X$ we construct the
  testing gadget~$\tbnd \Gamma_{\!A}$ by connecting the vertices \
  of~$X\setminus A$
  pairwise via the circuit gadget (using an arbitrary pairing and 
  potentially pairing a leftover vertex with itself). As in the
  proof of Theorem~\ref{thm:lower-bound-3-coloring}, we repeat this
  construction on the respective $u$-vertices of the just added
  circuits and iterate until we have added a single circuit on the
  very top. Let us denote the topmost $u$-vertex in this construction
  by~$u'$. Let~$\lambda$ be the number of circuits added in this
  fashion, then the total budget for~$\tbnd \Gamma_{\!A}$ is~$2\lambda$.
  Note that if a vertex of~$X \setminus A$ is inside a vertex cover,
  then the gadget~$\tbnd \Gamma_{\!A}$ has a vertex cover of size~$2\lambda$
  that includes its top vertex~$u'$. Otherwise, no vertex cover of $\tbnd \Gamma_{\!A}$
  of size~$2\lambda$ includes~$u'$.

  We construct~$\tbnd G_{\mathcal I}$ by 
  taking~$\oplus_{\tbnd H_{\!A} \in \mathcal I} \tbnd \Gamma_{\!A}$ and adding a 
  single vertex~$a$ that connects to all $u'$-vertices of the 
  gadgets $\{ \tbnd \Gamma_{\!A} \}_{\tbnd H_{\!A} \in \mathcal I}$.
  For simplicity, we pad out the graph with isolated~$K_2$s
  to ensure that every graph $\tbnd G_{\mathcal I}, \mathcal I \subseteq \mathcal H$
  constructed in this way has a minimum vertex cover of the same size~$\ell$. 

  We claim that~$\tbnd G_{\mathcal I} \oplus \tbnd H_{\!A}$ has a vertex
  cover of size~$\ell$ iff $\tbnd H_{\!A} \not \in \mathcal I$. If~$H_{\!A} \not \in \mathcal I$,
  then every gadget~$\Gamma_{A'}$ that comprises~$\tbnd G_{\mathcal I}$
  has~$A' \neq A$. By the above observation, every gadget~$\Gamma_{A'}$
  can therefore be covered with~$2\lambda$ vertices including the
  vertex~$u'$; thus all edges incident to the vertex~$a$ are covered
  and we stay within budget. If $H_{\!A} \in \mathcal I$
  then~$\tbnd G_{\mathcal I}$ contains the gadget~$\tbnd \Gamma_{\!A}$ and
  as observed we cannot include its $u'$-vertex into the budgeted~$2\lambda$
  vertices. Hence, the edge~$u'a$ cannot be covered within the total
  budget~$\ell$ and we need to invest~$\ell+1$ vertices to obtain
  a vertex cover. Letting~$p_{\mathcal I} = \ell - s$, we conclude
  that~$\mathcal H$ is a $s$-Myhill-Nerode family of size~$2^s$ and
  depth~$\td(\mathcal H) = s + o(s)$.
\end{proof}


\begin{figure}[thb]
  \vspace{-5pt}
  \begin{center}
    \includegraphics[width=0.7\textwidth]{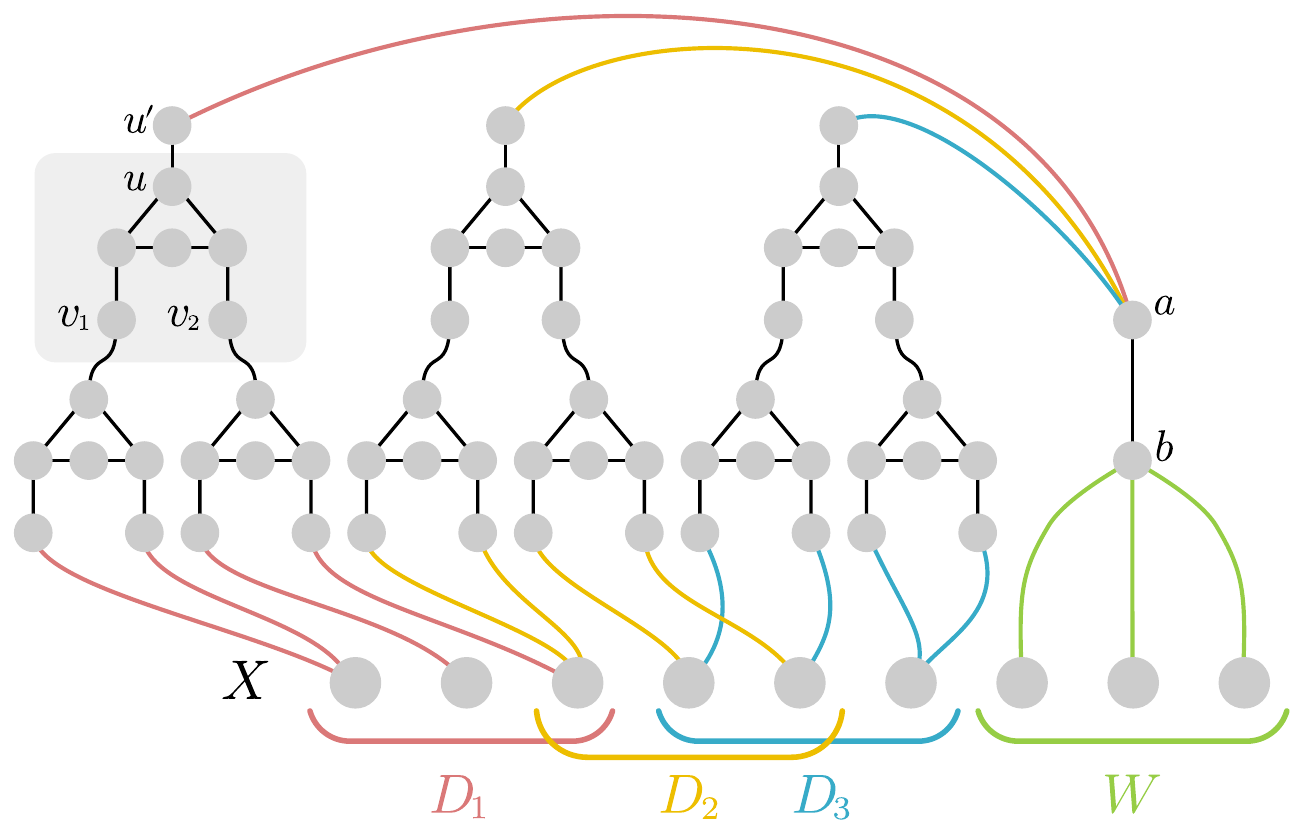}
  \end{center}
 \vspace{-15pt}
 \caption{The gadget~$\protect\tbnd \Gamma_{W}$ for~$\mathcal D_W = \{S_1,S_2,S_3\}$. Padding-vertices are not included.\label{fig:domset-gadget}}
  \vspace{-10pt}
\end{figure}

\begin{theorem}
  For every $\epsilon>0$, no DPTM solves
  \name{Dominating Set} on a tree, path or treedepth decomposition of
  width/depth $k$ with space bounded by $O\big((3-\epsilon)^k\log n\big)$.
\end{theorem}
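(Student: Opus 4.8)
The plan is to mirror the proofs of Theorem~\ref{thm:lower-bound-3-coloring} and of the \name{Vertex Cover} bound: for every boundary size $s$ I will build an $s$-Myhill-Nerode family $\mathcal H$ of size $3^s/\poly(s)$ and depth $s + o(s)$ for the \name{Dominating Set} DP decision problem, and then invoke Lemma~\ref{lemma:no-dptm} with $c = 3$. The three-way factor comes from the classical three states a boundary vertex can take in a dominating-set solution: it can be \emph{in} the dominating set, \emph{dominated} from below but not selected, or \emph{undominated} and not selected. A state is thus a labelled three-partition $\mathcal D_W = \{S_1,S_2,S_3\}$ of the boundary $X$, of which there are $3^s$.

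First I would build the \emph{forcing graphs}. For each pattern $\{S_1,S_2,S_3\}$ I add one boundaried graph $\tbnd H_{W}$ to $\mathcal H$ that pins down the state of every boundary vertex: a vertex of $S_1$ is dragged into every optimal solution by attaching two private leaves; a vertex of $S_2$ gets an internal neighbour that is itself forced (again by two private leaves) into the solution, so that the boundary vertex is dominated but need not be selected; a vertex of $S_3$ is left alone, so that it can only be dominated from the gluing side. I would pad each $\tbnd H_W$ with isolated forced gadgets so that the internal cost is a common value, and pair every $\tbnd H_W$ with that value as its integer $q$.

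Next comes the \emph{testing gadget} $\tbnd \Gamma_{W}$ (see Figure~\ref{fig:domset-gadget}), built in analogy with the circuit gadget of Theorem~\ref{thm:lower-bound-3-coloring}. Each $\tbnd \Gamma_W$ carries its own apex vertex $a$ and is designed so that, under a fixed internal budget, $a$ can be dominated within budget precisely when the state induced by the glued $\tbnd H_{W'}$ \emph{differs} from $W$ in at least one vertex, and $a$ costs one extra vertex when the two patterns coincide. Per boundary vertex $v$ I attach a small detector hanging off $v$ whose forced-versus-free behaviour flips exactly when the actual state of $v$ disagrees with the state $W$ prescribes, and I feed all detectors into a hierarchical ``domination-OR'' of depth $O(\log s)$ whose top feeds $a$; this logarithmic depth is what keeps the treedepth at $s + O(\log s)$. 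Setting $\tbnd G_{\mathcal I} = \bigoplus_{\tbnd H_W \in \mathcal I} \tbnd \Gamma_W$ and choosing $p_{\mathcal I}$ as the sum of the per-gadget budgets, I claim that $\tbnd G_{\mathcal I} \oplus \tbnd H_W$ has a dominating set of size at most $p_{\mathcal I} + q$ iff $\tbnd H_W \notin \mathcal I$: every present non-matching gadget admits a witnessing mismatch and so dominates its apex for free, whereas the unique matching gadget, present exactly when $\tbnd H_W \in \mathcal I$, leaves its apex undominated and forces the $+1$.

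The main obstacle is the testing gadget together with its budget bookkeeping. Unlike \name{$3$-Coloring}, where ``same colour'' propagates symmetrically, the three dominating-set states are asymmetric, and the gadget must separate ``in the set'', ``dominated'', and ``undominated'' using only edges to $v$; the genuinely delicate point is the undominated state, since a vertex of $S_3$ must be dominated by \emph{some} part of $\tbnd G_{\mathcal I}$, and I must prevent this domination obligation from leaking across gadgets and corrupting the tight, additive budget on which the $+1$ argument rests. Once the gadget's behaviour is pinned down so that a coincidence of patterns costs exactly one extra vertex while any disagreement is free, the remaining steps---bounding $|\tbnd H_W|$ and $|\tbnd G_{\mathcal I}|$ by $3^s\poly(s)$, bounding the treedepth of every composite by $s + O(\log s)$ via a decomposition keeping $X$ on a top path, and concluding $\td(\mathcal H) = s + o(s)$ and $|\mathcal H| = 3^s/\poly(s)$---are routine, and Lemma~\ref{lemma:no-dptm} finishes the proof.
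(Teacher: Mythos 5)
Your high-level frame is the paper's: an $s$-Myhill-Nerode family of size $3^s/\poly(s)$ and depth $s+o(s)$ fed into Lemma~\ref{lemma:no-dptm}, with forcing graphs that pin each boundary vertex to one of three states (selected / dominated internally / left for the other side); your forcing graphs are essentially identical to the paper's $H_{(B,D,W)}$. The gap sits exactly where you place it: the testing gadget. A detector hanging off a single boundary vertex $v$ cannot ``flip'' on the three states, because \emph{dominated-but-not-selected} and \emph{undominated} are locally indistinguishable at $v$ from the gadget's side --- in neither case is $v$ in the dominating set, and whether $v$ is dominated is decided by edges the detector does not see. Worse, in any feasible solution of the glued graph every vertex ends up dominated, so the ``undominated'' state has no witness once the instance is assembled; it only manifests as the question of \emph{which gadget pays} to dominate $v$. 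Your proposal also glues $\tbnd G_{\mathcal I} = \bigoplus_{\tbnd H_W \in \mathcal I} \tbnd \Gamma_W$ over members of $\mathcal I$ only, so when $\tbnd H_{W'} \notin \mathcal I$ nothing in $\tbnd G_{\mathcal I}$ is budgeted to dominate the deliberately undominated vertices of $\tbnd H_{W'}$, and the ``$\notin \mathcal I$ implies within budget'' direction breaks.

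The paper resolves both issues by reorganizing the tester rather than by building a cleverer per-vertex detector. It restricts to balanced partitions $(B,D,W)$ with $|B|=|D|=|W|=s/3$ (still $\Omega(3^s/s)$ states) and puts into \emph{every} $\tbnd G_{\mathcal I}$ one gadget $\tbnd \Gamma_W$ for each of the ${s \choose s/3}$ possible undominated sets $W$; $\tbnd \Gamma_W$ is the unique gadget responsible for dominating $W$, via a vertex $b$ adjacent to $W$ and to an apex $a$ (with a trivial $K_2$ version when no member of $\mathcal I$ uses that $W$). Inside $\tbnd \Gamma_W$, each $D$ with $H_{X\setminus(D\cup W),D,W} \in \mathcal I$ gets a sub-gadget $\tbnd \Lambda_D$ performing only a \emph{two-state} test --- does the dominating set intersect $D$? --- which the circuit gadget can realize because any $D \neq D_0$ over the same $W_0$ must meet the forced set $B_0$. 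The $+1$ then comes from $\tbnd \Gamma_{W_0}$ being unable to afford both dominating $W_0$ (taking $b$) and rescuing the failed test $\tbnd \Lambda_{D_0}$ (taking $a$ or $u'$). You would need to redesign your tester along these lines; as written, the central gadget of your proof does not exist.
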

\begin{proof}
  For any~$s$ dividable by three we construct an $s$-Myhill-Nerode
  family~$\mathcal H$ as follows. Let~$X$ be the $s$ boundary vertices of all
  the boundaried graphs in the following. Then for every three-
  partition~$\mathcal X = (B,D,W)$ of~$X$ into sets of size $s/3$, we
  construct a graph~$H_{\mathcal X}$ by connecting two pendant vertices to
  every vertex in $B$, connecting every vertex in $D$ to a vertex with two
  pendant vertices and leaving $W$ untouched. Intuitively, we want the
  vertices of~$B$ to be in any minimal dominating set, the vertices in~$D$ to
  be dominated from a vertex in~$H_{\mathcal X}$ and the vertices in~$W$ to be
  dominated from elsewhere. We add every pair $(H_{\mathcal X},s/3+1)$ to
  $\mathcal H$.

  For a subset~$\mathcal I \subseteq \mathcal H$ let~$\mathcal D_W = \{ D \mid
  H_{X\setminus (D\cup W),D,W} \in \mathcal I_W \}$ be all $D$-sets that
  appear together with a fixed~$W \subseteq X$ in~$\mathcal I$.
  We construct the graph~$\tbnd
  G_{\mathcal I}$ using the \emph{circuit gadget}~$v_1,v_2,u$ highlighted in
  Figure~\ref{fig:domset-gadget}: note that if $v_1,v_2$ are not dominated,
  then there is no dominating set of the gadget of size two that contains $u$.
  If one of~$v_1,v_2$ is dominated then such a dominating set containing $u$
  exists. For every~$W \subset X$ with~$|W| = s/3$
  construct a \emph{testing gadget}~$\tbnd \Gamma_W$ using
  the circuit gadgets as follows. 
  If~$\mathcal D_W$ is non-empty we construct~$\tbnd \Gamma_W$ as follows.
  For \emph{every} set $D \in \mathcal D_W$ we construct the gadget $\tbnd \Lambda_D$
  by arbitrarily pairing the vertices in $D$ and connecting them via the
  circuit gadget.  If~$|D|$ is odd, we pair some vertex of~$D$ with itself. We
  then repeat the construction with all the $u$-vertices of those gadgets,
  resulting in a hierarchical structure of depth~$\sim \log |D|$. 
  To finalize the construction of~$\tbnd \Lambda_D$ we take
  the $u$-vertex of the last layer and connect it to a vertex $u'$.
  Let in the following~$\lambda$ be the number of circuits we used to construct
  such a~$\tbnd \Lambda_D$-gadget (this quantity only depends on~$s$).
  This concludes the construction of~$\tbnd \Lambda_D$.
  If~$\mathcal D_W$ is empty, then
  we let~$\tbnd \Gamma_W$ be a~$K_2$ with one vertex connected to all vertices
  in~$W$ plus ${2s/3 \choose s/3} 2\lambda$ isolated padding-vertices.
  Otherwise we obtain~$\tbnd
  \Gamma_W$ by taking the graph~$\bigoplus_{D \in \mathcal D}
  \! \tbnd \Lambda_D$ and adding two additional vertices~$a,b$ as well as 
  $\big({2s/3 \choose s/3} - |\mathcal D_W|\big) 2 \lambda$
  isolated vertices for padding. The vertex~$a$ is connected to all~$u'$
  vertices of all the gadgets~$\{ \tbnd \Lambda_D \}_{D \in \mathcal D_W}$ and
  the vertex $b$ is connected to~$\{a\} \cup W$ (\cf
  Figure~\ref{fig:domset-gadget} for an exemplary construction).

  We assign the budget $\alpha = {2s/3 \choose s/3} 2\lambda + 1$ for the
  gadget~$\tbnd \Gamma_W$. If~$\mathcal D_W = \varnothing$ this budget is sufficient
  to include all padding-vertices and the endpoint of the $K_2$ that connects to~$W$.
  Hence, we can dominate~$W$ as well as~$\tbnd \Gamma_W$ within this budget.
  For non-empty~$\mathcal D_W$, after accounting for the padding-vertices,
  we are left with a budget of~$2\lambda$ for each gadget~$\tbnd \Lambda_D$,
  $D \in \mathcal D_W$ that comprise~$\tbnd \Gamma_W$.
  While at least~$2\lambda$ vertices will always be necessary to dominate~$\tbnd \Lambda_D$, only
  if at least one vertex of~$D$ is contained in the dominating set we can
  dominate the $u'$-vertex of~$\Lambda_D$ within budget. In the case
  that no vertex of~$D$ is in the dominating set $2\lambda$ vertices only suffice
  to dominate all vertices of~$\tbnd \Lambda_D$ except~$u'$.
  Hence, the gadget~$\tbnd \Gamma_W$ can dominate itself \emph{and}~$W$
  within the budget~$\alpha$ exactly when for all~$D \in \mathcal D_W$,
  least one vertex of~$D$ is in the dominating set.
  Otherwise, we need to include~$a$ into the dominating set and hence our budget
  does not suffice to include~$b$ in order to dominate~$W$.
  Finally we define for every~$\mathcal I \subseteq \mathcal H$ the graph
   $\tbnd G_{\mathcal I} = \bigoplus_{W \subset X, |W| = s/3} \! \tbnd \Gamma_W$.

  Let us now show that our boundaried graphs work as intended and calculate
  the appropriate parameters~$p_{\mathcal I}$. Consider any graph~$\tbnd
  H_{B_0,D_0,W_0} \in \mathcal H$ where again~$B_0,D_0,W_0$ is a partition
  of~$X$ into sets of size~$s/3$ and the graph~$\tbnd G_{\mathcal I}$ for
  any~$\mathcal I \subseteq \mathcal H$. We show that~$\tbnd H_{B_0,D_0,W_0}
  \oplus G_{\mathcal I}$ has a dominating set of size~${s \choose s/3} \alpha
  + s/3 + 1$ iff $\tbnd H_{B_0,D_0,W_0} \not \in \mathcal I$ (otherwise a minimal
  dominating set will exceed his budget by one).

  We use the sets~$\mathcal D_W, W
  \subseteq X$ as defined previously. 
  First, assume that~$\mathcal D_{W_0} = \varnothing$, that is,
  for every set~$B,D$ we have that~$H_{B,D,W_0} \not \in \mathcal I$ and
  in particular~$H_{B_0,D_0,W_0} \not \in \mathcal I$. As observed above,
  the gadget~$\Gamma_{W_0}$ can dominate itself and~$W_0$ with a budget of~$\alpha$.
  All other gadgets~$\Gamma_W, W \neq W_0$ do not need to dominate their
  respective~$W$-sets and can therefore include their $a$-vertices, accordingly
  they can all dominate their internal vertices within the assigned budget~$\alpha$.
  Including the~$s/3$ vertices of~$B_0$ as well as the one vertex in~$H_{B_0,D_0,W_0}$
  used to dominate~$D_0$ this correctly
  sums up to a dominating set of size~${s \choose s/3} \alpha + s/3 + 1$.
  Next, assume that~$\mathcal D_{W_0} \neq \varnothing$ and~$D_0 \not \in \mathcal D_{W_0}$, \ie 
  again~$H_{B_0,D_0,W_0} \not \in \mathcal I$. Therefore, for every set~$D \in D_{W_0}$
  we have that~$D \cap B_0 \neq \varnothing$. Since we include~$B_0$ in our dominating
  set, the gadgets~$\{\tbnd \Lambda_D\}_{D \in D_{W_0}}$ that comprise~$\tbnd \Gamma_{W_0}$ can
  all dominate their respective~$u'$-vertices within the assigned budet of~$2\lambda$.
  Hence, $\tbnd \Gamma_{W_0}$ can dominate~$W_0$ using the assigned budget of~$\alpha$.
  All other gadgets~$\Gamma_W, W \neq W_0$ again stay within budget as observed above
  and we obtain in total a dominating set of size~${s \choose s/3} \alpha + s/3 + 1$.
  Finally, consider the case that~$D_0 \in \mathcal D_{W_0}$, \ie 
  $H_{B_0,D_0,W_0} \in \mathcal I$. Then the gadget~$\Lambda_{D_0}$ contained
  in~$\Gamma_{W_0}$ cannot dominate its~$u'$ vertex within a budget of~$2\gamma$,
  hence we need to include either~$u'$ or~$a$. This depletes the budget of~$\Gamma_{W_0}$
  and hence we need in total~${s \choose s/3} \alpha + s/3 + 2$ vertices to
  dominate~$\tbnd H_{B_0,D_0,W_0} \oplus G_{\mathcal I}$. 

  Choosing~$p_{\mathcal I} = {s \choose s/3} \alpha$ completes the construction
  of~$(\tbnd \Gamma_{\mathcal I}, p_{\mathcal I})$ and we conclude that~$\mathcal H$
  is an $s$-Myhill-Nerode family of size~$\Omega(3^s / s)$. For~$s$ indivisble by
  three we take the next smaller integer~$s'$ divisable by three and use the
  $s'$-family as the $s$-family. It is easy to confirm that the depth of~$\mathcal H$
  is~$\td(\mathcal H) = s + o(s)$ and the theorem follows from Lemma~\ref{lemma:no-dptm}.
\end{proof}


\section{\name{Dominating Set} using \boldmath${O(t^3 \log t + t
    \log n)}$ space}\label{sec:branching}

That branching might be a viable algorithmic design strategy for
low-treedepth graphs can easily be demonstrated for problems like
\name{$3$-Coloring} and \name{Vertex Cover}: we simply branch on the
topmost vertex of the decomposition and recurse into (annotated) subinstances.
For \name{$c$-Coloring}, this leads to an algorithm with running time
$O(c^t \cdot n)$ and space complexity $O(t \log n)$.
Since it is possible to perform a depth-first traversal
of a given tree using only $O(\log n)$
space~\cite{lindell1992logspace}, the space consumption of this
algorithm can be easily improved to $O(t + \log n)$. Similarly,
branching solves \name{Vertex Cover} in time~$O(2^t \cdot n)$
and space~$O(t \log n)$. 

The task of designing a similar algorithm for \name{Dominating Set}
is much more involved. Imagine branching on the topmost vertex of the
decomposition: while the branch that includes the vertex into the
dominating set produces a straightforward recurrence into annotated
instances, the branch that excludes it from the dominating set needs
to decide \emph{how} that vertex should be dominated. The algorithm
we present here proceeds as follows: We first guess whether the current node~$x$ is
in the dominating set or not. Recall that~$P_x$ denotes the nodes
of the decomposition that lie on the unique path from~$x$ to the root
of the decomposition (and~$x \not \in P_x$). We iterate over every possible
partition $S_1 \uplus \dots \uplus S_l = P_x \cup \{x\}$ 
into $l \leq t$ sets of $P_x \cup \{x\}$. 
The semantic of a block~$S_i$ is that we want every element~$S_i$
to be dominated exclusively by nodes from a specific subtree of~$x$.
A recursive call on a child $y$ of $x$, together with an element
of the partition $S_i$, will return the size of a dominating set which
dominates~$V(T_y) \cup S_i$. The remaining issue is how these specific
solutions for the subtrees of~$x$ can be combined into a solution in
a space-efficient manner. To that end, we first compute
the size of a dominating set for $T_y$ itself and use this as baseline 
cost for a subtree~$T_y$. For a block~$S_i$ of a partition of~$P_x$,
we can now compare the cost of dominating~$V(T_y) \cup S_i$ against
this baseline to obtain overhead cost of dominating~$S_i$ using
vertices from~$T_y$. Collecting these overhead costs in a table for
subtrees of~$x$ and the current partition, we are able to apply
certain reduction rules on these tables to reduce their size to at most
$t^2$ entries. Recursively choosing the best partition then yields the
solution size using only polynomial space in $t$ and logarithmic in
$n$. Formally, we prove the following:

\begin{theorem}\label{thm:poly-space-domset}
  Given a graph $G$ and a treedepth decomposition $T$ of $G$,
  Algorithm~\ref{alg:domset-polyspace} finds the size of a minimum
  dominating set of $G$ in time $t^{O(t^2)} \cdot n$ using
  $O(t^3 \log t + t \log n)$ bits.
\end{theorem}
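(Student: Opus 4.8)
The plan is to realise the algorithm as a recursion that, for a node~$x$ of~$T$ and a subset~$A \subseteq P_x \cup \{x\}$ of \emph{requested} path-vertices, returns $\mathrm{dom}(x,A)$, the minimum number of vertices of~$T_x$ whose selection dominates~$V(T_x) \cup A$; the size of a minimum dominating set of~$G$ is then $\mathrm{dom}(r,\varnothing)$ for the root~$r$. Correctness would be proved by induction on the height of~$x$. The single structural fact driving the argument is that in a treedepth decomposition the subtrees hanging off the children of~$x$ are pairwise non-adjacent in~$G$ and meet only in~$P_x \cup \{x\}$. Consequently, once we fix whether~$x$ itself is selected (which dominates~$x$ and its neighbours among~$A$), the remaining requests split among the child-subtrees, each subtree being charged independently. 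I would first establish a \emph{baseline} $\mathrm{dom}(y,\varnothing)$ for every child~$y$ and then express the cost of also dominating a block of requests~$S_i$ from within~$T_y$ as the \emph{overhead} $\mathrm{dom}(y, S_i) - \mathrm{dom}(y,\varnothing)$, so that the total cost is the sum of all baselines plus the overheads incurred by the chosen assignment of requests to subtrees. The heart of the correctness proof is that iterating over all set-partitions of the still-undominated vertices of~$P_x \cup \{x\}$, and assigning each block to a \emph{distinct} subtree, loses nothing: if an optimal solution happened to dominate two blocks from the same subtree, that solution is recovered by the coarser partition in which those blocks are merged, and we minimise over all partitions.

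Then come the reduction rules. For a fixed partition into blocks $S_1, \dots, S_l$ with $l \le t$, choosing where to dominate the blocks becomes a minimum-cost assignment of the~$l$ blocks to \emph{distinct} child-subtrees, where assigning block~$S_i$ to subtree~$T_y$ costs the overhead defined above. I claim that for each block it suffices to remember only its~$l$ cheapest subtrees: in any optimal assignment the subtree serving~$S_i$ must be among its~$l$ cheapest, since the at most $l-1$ strictly cheaper candidates can in the worst case only be occupied by the other blocks, leaving one free to swap into. Hence at most~$l \cdot l \le t^2$ candidate (subtree, overhead) pairs are ever relevant, and these can be accumulated in a single streaming pass over the children: when a child~$y$ is visited we recurse to obtain its overheads for all blocks and update, per block, a running list of the~$t$ smallest overheads together with a \emph{local} index identifying the contributing subtree, so that distinctness across blocks can still be enforced during the final matching. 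Crucially the real identity of~$y$ never has to be stored, because we recurse into~$y$ at the moment it is streamed; only local indices in~$\{1, \dots, t^2\}$ and overhead values bounded by~$t$ are retained.

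For the space bound, the recursion descends along root-to-node paths of~$T$, so its call stack has depth at most~$t$; navigation within the encoded decomposition uses the log-space tree-traversal technique of Lindell~\cite{lindell1992logspace}. Each of the at most~$t$ active frames stores a constant number of~$O(\log n)$-bit counters and pointers (accumulated costs and the current child position), for~$O(t \log n)$ in total, and additionally one reduced table of at most~$t^2$ entries, each an~$O(\log t)$-bit local index together with an~$O(\log t)$-bit overhead, for~$O(t^2 \log t)$ bits per frame and~$O(t^3 \log t)$ overall; together these give the claimed~$O(t^3 \log t + t \log n)$. For the running time, the recursion has depth at most~$t$ and at each visited node branches over at most~$B_t = t^{O(t)}$ set-partitions of~$P_x \cup \{x\}$; multiplying these branchings along the at most~$t$ levels of a root-to-node path yields a factor~$t^{O(t^2)}$, while the per-node work (the streaming pass and the assignment, each polynomial in~$t$) together with the amortised cost of traversing the~$n$ nodes of~$T$ contributes the linear factor~$n$, for a total of~$t^{O(t^2)} \cdot n$.

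I expect the main obstacle to be the space-tight implementation of the reduction rules rather than correctness: proving that keeping only~$t^2$ overhead values per partition preserves the optimum requires the matching argument above, and squeezing the retained information into~$O(t^3 \log t)$ bits hinges on never recording a global~$O(\log n)$-bit subtree identifier---only small local indices and small overheads---which in turn forces the children to be processed in a single recompute-on-the-fly stream rather than tabulated. Verifying that this streaming discipline is compatible with the recursive evaluation of the overheads, which themselves reopen the same kind of table one level deeper, is the delicate bookkeeping that the formal proof must carry out.
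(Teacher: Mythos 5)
Your overall strategy is the paper's: branch on whether $x$ is selected, partition the still-undominated path vertices into blocks assigned to distinct child subtrees, measure each subtree's contribution as an overhead over a baseline, keep only the cheapest $t$ candidates per block via the same exchange argument, and combine them by a matching; the time and space accounting also match. However, there is one genuine gap in your formulation of the recursion. You define the subproblem as $\mathrm{dom}(x,A)$, ``the minimum number of vertices of $T_x$ whose selection dominates $V(T_x)\cup A$,'' and you only condition on whether $x$ itself is selected. This state is insufficient: a vertex $v\in V(T_x)$ may be adjacent to an ancestor $p\in P_x$ that the outer branching has already placed in the dominating set, in which case $v$ needs no dominator inside $T_x$ at all. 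Your subproblem forces every vertex of $T_x$ to be dominated from within $T_x$, so $\mathrm{dom}(x,A)$ can strictly overestimate the restricted cost and the recursion does not decompose the optimum correctly. The fix is exactly what the paper's algorithm does: carry an additional parameter $D\subseteq P_x\cup\{x\}$ recording which path vertices have been committed to the dominating set, and compute separate baselines and overheads for $D$ and $D\cup\{x\}$; this costs nothing asymptotically since $|D|\le t$ and the two cases are already present in your branching on $x$.

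A secondary, smaller gap: your $O(t^3\log t)$ bound needs each stored overhead to fit in $O(\log t)$ bits, but you assert rather than prove that overheads are bounded by $t$ (they need not be; dominating a block $S_i$ from $T_y$ can be arbitrarily expensive or impossible). The paper closes this by observing that any overhead exceeding $|S_i|\le t$ may be discarded, because simply taking all of $S_i$ into the dominating set is cheaper and that option is explored in a different branch of the recursion. With these two repairs your argument coincides with the paper's proof.
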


\noindent
We split the proof of Theorem~\ref{thm:poly-space-domset} into lemmas
for correctness, running time and used space. 

\begin{lemma}
  Algorithm~\ref{alg:domset-polyspace} called on a graph $G$, a
  treedepth decomposition $T$ of $G$, the root $r$ of $T$ and $P = D =
  \varnothing$ returns the size of a minimum dominating set of $G$.
\end{lemma}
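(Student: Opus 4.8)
The plan is to prove, by induction on the treedepth decomposition, a statement about the recursive calls that is stronger than the lemma itself. Writing $A(x,P,D)$ for the value returned by a call on node $x$ with $P,D \subseteq P_x$, I claim the invariant
\[
  A(x,P,D) = \min\bigl\{\,|S| : S \subseteq V(T_x),\ S \cup D \text{ dominates } V(T_x)\cup P\,\bigr\},
\]
that is, $A(x,P,D)$ is the least number of vertices that must be chosen \emph{inside} $T_x$ so that, together with the ancestors already committed to the dominating set (the set $D$), every vertex of $T_x$ and every requested ancestor in $P$ is dominated. At the root $r$ we have $P_r=\varnothing$, which forces $P=D=\varnothing$, and the invariant specialises to the least size of a set $S\subseteq V(T_r)=V(G)$ dominating $V(G)$, i.e.\ exactly the minimum dominating set of $G$; hence the lemma is immediate once the invariant is established.

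The induction runs on the height of $T_x$. In the base case $x$ is a leaf, $T_x=\{x\}$ has no subtrees, and the algorithm only tests the two guesses $x\in S$ and $x\notin S$, checking directly whether the resulting set together with $D$ dominates $\{x\}\cup P$; this matches the invariant. For the inductive step I would prove soundness and completeness separately. For soundness I fix the choices of a single branch---the guess for $x$, the partition of the still-undominated vertices of $P\cup\{x\}$ into blocks, and the assignment of blocks to the children $y_1,\dots,y_d$---and assemble the set $S$ consisting of the union $\bigcup_j S_j^\star$ of the child-solutions supplied by the induction hypothesis, together with $x$ itself whenever $x$ is guessed into the set; I then verify that $S\cup D$ dominates $V(T_x)\cup P$ and that $|S|$ equals the reported branch cost, so every returned value is the cost of a genuine feasible selection. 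For completeness I start from an optimal solution $S^\star$, read off the guess for $x$, and for each not-yet-dominated vertex $w\in P\cup\{x\}$ pick a vertex of $S^\star$ that dominates it; since in a treedepth decomposition every edge joins a node to one of its ancestors, this dominator lies in a \emph{unique} subtree of $x$ (or equals $x$, or lies in $D$), which produces exactly one of the partitions the algorithm enumerates. Restricting $S^\star$ to each child subtree then gives a feasible solution to the child subproblem, so by the induction hypothesis the matching branch costs at most $|S^\star|$.

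The delicate part, and the one I expect to dominate the proof, is the cost bookkeeping that keeps the working space polynomial in $t$. I would make precise that the algorithm charges each child its \emph{baseline} $A(y_j,\varnothing,D')$ of dominating $V(T_{y_j})$ alone and then adds the \emph{overhead} $A(y_j,S_{i_j},D')-A(y_j,\varnothing,D')$ of additionally dominating the block $S_{i_j}$ assigned to it, where $D'$ augments $D$ by $x$ exactly when $x$ is guessed into the set; the identity tying both directions together is that the total branch cost equals $\sum_j A(y_j,S_{i_j},D')$, plus one when $x$ is guessed in. The hardest step is then to show that the \emph{reduction rules} applied to the overhead tables never alter this minimum: I would phrase each rule as replacing a table of (child, block) overheads by a smaller one, of at most $t^2$ entries, whose optimum over all partitions into at most $t$ blocks is unchanged, and then check rule-by-rule that a minimiser of the reduced table lifts back to a minimiser of the original. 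Establishing this invariance---rather than the comparatively routine domination checks---is where the real work of the correctness proof lies, because the space-saving compression is precisely the place where an optimal assignment could conceivably be discarded.
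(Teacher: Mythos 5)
Your invariant $A(x,P,D)=\min\{|S|:S\subseteq V(T_x),\ S\cup D \text{ dominates } V(T_x)\cup P\}$ is the right one (and arguably cleaner than the paper's, which phrases the induction relative to a fixed minimal dominating set $S$), and your soundness/completeness decomposition via the observation that in a treedepth decomposition each dominator of an ancestor lies in $\{x\}$, in $D$, or in a unique child subtree is exactly the structure the paper uses. The problem is that the one step you yourself single out as ``where the real work of the correctness proof lies''---that discarding all but the cheapest entries per block does not destroy the optimum---is left as a declared intention (``I would phrase each rule as\ldots{} and then check rule-by-rule'') rather than an argument. That is the only non-routine claim in the whole lemma, so as written the proposal is a correct plan with its central step unexecuted.

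The missing argument is a short pigeonhole/exchange step, which the paper supplies: suppose an optimal assignment dominates block $S_i$ from child subtree $T_y$, but $y$ was evicted from the list $L[i]$ because $l$ (indeed the paper says $t+1$, but $l\le t$ suffices) other children have strictly smaller overhead for $S_i$. Since the assignment of blocks to children is injective and there are only $l$ blocks, at least one of those cheaper children $T_{y'}$ is assigned no block at all; replacing $y$ by $y'$ as the dominator of $S_i$ (letting $T_y$ revert to its baseline solution) changes the total cost by $\bigl(A(y',S_i,D')-b_{y'}\bigr)-\bigl(A(y,S_i,D')-b_{y}\bigr)\le 0$, so some optimal assignment survives the truncation. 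You should also record the small observation (which the paper uses for the space bound, but which also keeps the tables well-defined) that any overhead exceeding $|S_i|$ can be discarded because taking all of $S_i$ into the solution is explored in another branch. With the exchange argument inserted, your proof goes through and coincides with the paper's.
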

\begin{proof}
  If we look at a minimal dominating set $S$ of $G$ we can charge
  every node in $V(G) \setminus S$ to a node from $S$ that dominates
  it. We are thus allowed to treat any node in $G$ as if it was
  dominated by a single node of $S$.
  We will prove this lemma by induction, the inductive hypothesis
  being that a call on a node $x$ with arguments $D = S \cap P_x$ and
  $P \subseteq P_x$ being the set of nodes dominated from nodes in
  $T_x$ returns $|S \cap V(T_x)|$.

  It is clear that the algorithm will call itself until a leaf is
  reached. Let $x$ be a leaf of $T$ on which the function was
  called. We first check the condition at line~\ref{line:take-leaf},
  which is true if either $x$ is not dominated by a node in $D$ or if
  some node in $P$ is not yet dominated. In this case we have no
  choice but to add $x$ to the dominating set. Three things can
  happen: $P$ is not fully dominated, which means that it was not
  possible under these conditions to dominate $P$, in which case we
  correctly return $\infty$, signifying that there is no valid
  solution. Otherwise we can assume $P$ is dominated and we return $1$
  if we had to take $x$ and $0$ if we did not need to do so. Thus the
  leaf case is correct.

  We assume now $x$ is not a leaf and thus we reach
  line~\ref{line:non-leaf}. We first add $x$ to $P$, since it can only
  be dominated either from a node in $D$ or a node in $T_x$. Nodes in
  $T_x$ can only be dominated by nodes from $V(T_x) \cup P_x$. We
  assume by induction that $D = S \cap P_x$ and that $P$ only contains
  nodes which are either in $S$ or dominated from nodes in
  $T_x$. Algorithm~\ref{alg:domset-polyspace} executes the same
  computations for $D$ and $D \cup \{x\}$, representing not taking and
  taking $x$ into the dominating set respectively. We must show that
  the set $P$ for the recursive calls is correct. There exists a
  partition of the nodes of $P$ not dominated by $D$ (respectively $D
  \cup \{x\}$) such that the nodes of every element of the partition
  are dominated from a single subtree $T_y$ where $y$ is a child of
  $x$. The algorithm will eventually find this partition on
  line~\ref{line:partition-loop}. The baseline value, i.e.~the size of
  a dominating set of $T_y$ given that the nodes in $D$ (respectively
  $D \cup \{x\}$) are in the dominating set, gives a lower bound for
  any solution. In the lists in $L$ and $L'$ we keep the extra cost
  incurred by a subtree $T_y$ if it has to dominate an element of the
  partition. 
  We only need to keep the best $t$ values
  for every $S_i$: Assume that it is optimal to dominate $S_i$ from
  $T_y$ and there are $t+1$ subtrees induced on children of $x$ whose
  extra cost over the baseline to dominate $S_i$ is strictly smaller
  than the extra cost for $T_y$. At least one of these subtrees
  $T_{y'}$ is not being used to dominate an element of the
  partition. This means we could improve the solution by letting $T_y$
  dominate itself and taking the solution of $T_{y'}$ that also
  dominates $S_i$. Keeping $t$ values for every element in the
  partition suffices to find a minimal solution, which is what
  $find\_min\_solution$ does. Since with lines~\ref{line:min-1}
  and~\ref{line:min-2} we take the minimum over all possible
  partitions and taking $x$ into the dominating set or not, we get
  that by inductive assumption the algorithm returns the correct
  value. The lemma follows since the first call to the algorithm with
  $D = P = \varnothing$ is obviously correct.
\end{proof}

\def\nothing{\varnothing}
\begin{algorithm}[tbh]
  
  \footnotesize
  \caption{domset{\label{alg:domset-polyspace}\sc }}
  \KwIn{A graph $G$ and a treedepth decomposition $T$ of $G$, a node
    $x$ of $T$ and sets $P,D \subseteq V(G)$.}  
  \KwOut{The size of a minimum Dominating Set.} \BlankLine

  \If{$x$ is a leaf in $T$}{
    \lIf{$x \notin N_G[D]$ or $P \not\subseteq N_G[D]$\label{line:take-leaf}}{
      $D := D \cup \{x\}$
    }
    \lIf{$P \not\subseteq N_G[D]$}{
      \Return $\infty$
    }\lElseIf{$x \in D$}{
      \Return $1$
    }\lElse{
      \Return $0$
    }
  }
  $result := \infty$\label{line:non-leaf}\;
  $P := P \cup \{x\}$\;
  \ForEach{partition $S_1 \uplus \dots \uplus S_l$ of $P$\label{line:partition-loop}}{
    $L := |P|$-element array of ordered lists\;
    $L' := |P|$-element array of ordered lists\;
    $baseline := 0$\;
    $baseline' := 0$\;
    \ForEach{child $y$ of $x$ in $T$}{
      $b := domset(G,T,y,\nothing,D)$\;
      $baseline := baseline + b$\;
      $b' := domset(G,T,y,\nothing,D \cup \{x\})$\;
      $baseline' := baseline + b'$\;
      \For{$S_i \in \{S_1,\dots,S_l\}$}{
        $c := domset(G,T,y,S_i,D) - b$\;
        $c' := domset(G,T,y,S_i,D \cup \{x\}) - b'$\;
        Insert $(c,y)$ into ordered list $L[i]$ and keep only smallest $l$ elements\;
        Insert $(c',y)$ into ordered list $L'[i]$ and keep only smallest $l$ elements\;
      }
    }
    \tcc{Find minimal cost of dominating $\{S_1,\dots,S_l\}$ from $L$ and $L'$ by 
      e.g.~solving appropriate matching problems.}
    result := $min(result,find\_min\_solution(L) + baseline)$\label{line:min-1}\;
    result := $min(result,find\_min\_solution(L') + baseline' + 1)$\label{line:min-2}\;
  }
  \Return result\;
\end{algorithm}

\begin{lemma}
  Algorithm~\ref{alg:domset-polyspace} runs in time $t^{O(t^2)} \cdot n$.
\end{lemma}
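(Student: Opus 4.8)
The plan is to analyze the recursion tree of Algorithm~\ref{alg:domset-polyspace} and bound the total work by separating the per-node branching factor from the linear dependence on the input size. The algorithm is called recursively on children of the current node, so each node~$x$ of~$T$ is visited once for every combination of arguments $(P,D)$ that can reach it through the chain of recursive calls along the path from the root to~$x$. First I would observe that the depth of the decomposition is at most~$t$, so any node~$x$ has $|P_x| < t$; consequently the set~$D \subseteq P_x$ and the block~$S_i \subseteq P_x \cup \{x\}$ passed in a recursive call are subsets of a ground set of size at most~$t$. This immediately bounds the number of distinct argument tuples per node.

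Next I would count the branching more carefully. At a fixed node~$x$, the algorithm iterates over every partition $S_1 \uplus \dots \uplus S_l$ of~$P \cup \{x\}$, of which there are at most the $t$-th Bell number, \ie $t^{O(t)}$ partitions. For each partition and each child~$y$, the recursive calls pass either~$\varnothing$ or a single block~$S_i$ as the~$P$-argument and either~$D$ or~$D\cup\{x\}$ as the~$D$-argument. The key point is that the~$P$-argument received by a child is always either empty or a \emph{single} block of a partition of its parent's path-set, hence one of at most~$2^t$ subsets of~$P_y$, and the~$D$-argument is one of at most~$2^t$ subsets of~$P_y$ as well. I would therefore argue that the number of distinct $(P,D)$ argument pairs with which any single node can be called over the entire execution is bounded by~$t^{O(t)}$, and that the internal per-call work (iterating over partitions, maintaining the length-$\leq t$ ordered lists~$L,L'$, and running $find\_min\_solution$ via the matching computation) is itself~$t^{O(t)}$.

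Combining these, the total work charged to a single node is $t^{O(t)} \cdot t^{O(t)} = t^{O(t)}$ calls, each doing $t^{O(t)}$ internal work, which I would want to consolidate into the stated $t^{O(t^2)}$ bound; the extra factor in the exponent comes from the fact that the recursion nests to depth~$t$, so the branching multiplies across the~$t$ levels and the number of root-to-node argument sequences is $\big(t^{O(t)}\big)^{t} = t^{O(t^2)}$. Since each of the~$n$ nodes is charged this amount and there is no hidden superlinear dependence on~$n$ (the recursion structure follows the tree~$T$, and each individual node's processing is independent of~$n$), the running time is $t^{O(t^2)} \cdot n$.

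The main obstacle I expect is the bookkeeping that shows the dependence on~$n$ is exactly linear rather than, say, $n$ times a factor that secretly depends on the number of descendants. Concretely, I must verify that the number of times a node~$y$ is \emph{entered} across the whole run is governed solely by the number of distinct argument tuples coming down the single path~$P_y$ (which is a function of~$t$ alone) and not by the size of the subtree above or below it; the depth-first, argument-parametrized structure of the recursion makes this plausible, but I would pin it down by an inductive charging argument over the levels of~$T$, showing that the multiset of argument tuples arriving at level~$i{+}1$ is obtained from level~$i$ by a blow-up factor of~$t^{O(t)}$ that is independent of~$n$. Getting the exponent accounting consistent between the per-level branching and the final $t^{O(t^2)}$ is the delicate part; the rest is routine.
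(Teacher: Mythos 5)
Your proposal is correct and follows essentially the same route as the paper: bound the per-node branching factor by the number of partitions ($t^{O(t)}$) times the $O(t)$ recursive calls per partition per child, multiply this factor across the at most $t$ levels of the decomposition to get $t^{O(t^2)}$ calls charged to each node, and observe that the non-recursive work per call is polynomial in $t$ (the matching for $find\_min\_solution$), yielding $t^{O(t^2)} \cdot n$ overall. The brief detour about counting \emph{distinct} $(P,D)$ argument pairs is unnecessary (the algorithm does not memoize), but your final accounting via the product of branching factors along the root-to-node path is exactly the paper's argument.
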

\begin{proof}
  The running time when $x$ is a leaf is bounded by $O(t^2)$, since
  all operations exclusively involve some subset of the $t$ nodes in
  $P_x \cup \{x\}$. Since $|P| \leq t$ the number of partitions of $P$
  is bounded by $t^t$. When $x$ is not a leaf the only time spent on
  computations which are not recursive calls of the algorithm are all
  trivially bounded by $O(t)$, except the time spent on
  $find\_min\_solution$, which can be solved via a matching problem in time~$\poly(t)$. The number of recursive calls that a single
  call on a node $x$ makes on a child $y$ is $O(t \cdot
  t^t)$ which bounds total number of calls on a single node by
  $t^{O(t^2)}$. This proves the claim.
\end{proof}

\begin{lemma}
  Algorithm~\ref{alg:domset-polyspace} uses $O(t^3 \log t + t
  \log n)$ bits of space.
\end{lemma}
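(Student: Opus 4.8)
The plan is to charge all space to the recursion stack. Since \texttt{domset} only ever calls itself on a \emph{child} of the current node, the chain of active invocations always corresponds to a single root-to-node path in $T$, and such a path has at most $t$ nodes, where $t$ is the depth of $T$. Hence at most $t$ stack frames are live at any moment, and it suffices to bound (i)~the space of one frame and (ii)~the data shared by all frames, and then multiply (i) by~$t$.

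First I would dispose of the inexpensive per-frame data. The sets $P$ and $D$ passed to a call on a node $x$ are always subsets of $P_x \cup \{x\}$, that is, of the current root-path; rather than copying vertex identifiers into every frame (which would cost $O(t \log n)$ per frame and $O(t^2 \log n)$ overall) I would keep the path itself implicitly as the sequence of active nodes on the stack --- one identifier, $O(\log n)$ bits, per frame --- and represent $P$ and $D$ inside a frame as $O(t)$-bit incidence vectors over that path. The current partition of $P$ is stored as a block index per element, $O(t \log t)$ bits, and the scalars $baseline$, $baseline'$, $result$, together with the loop variable ranging over the (up to $n$) children of $x$, contribute $O(\log n)$ bits per frame. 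Summed over the $t$ frames these yield the $O(t \log n)$ term, plus a subsumed $O(t^2)$ and $O(t^2\log t)$.

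The dominant cost is the pair of arrays $L,L'$, and the crux of the lemma is that each of their entries fits in $O(\log t)$ rather than $O(\log n)$ bits. The stored quantity $c = domset(y,S_i,D) - b$ is not an absolute solution size but the \emph{overhead} of additionally dominating $S_i$ from $T_y$; I would show $0 \le c \le |S_i| \le t$ (or $c=\infty$, stored as a single sentinel) by taking a baseline solution for $T_y$ and extending it with at most one $T_y$-vertex per element of $S_i$, so each cost needs only $O(\log t)$ bits. For the accompanying subtree reference I would observe that each array holds at most $l^2 \le t^2$ entries simultaneously, so at most $O(t^2)$ distinct subtrees are stored at once; labelling them from a recycled pool of $[t^2]$ identifiers (freeing a label once its subtree is evicted from every list) makes each reference $O(\log t)$ bits as well. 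Thus a frame spends $O(t^2 \log t)$ bits on $L,L'$, and $O(t^3 \log t)$ over all $t$ frames.

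The main obstacle is precisely this $O(\log t)$-per-entry accounting: a naive implementation storing absolute costs and raw vertex identifiers gives $O(t^3 \log n)$, which is strictly weaker when $t = n^{o(1)}$. The argument therefore stands or falls on the overhead bound $c \le t$ and on the fact that only $O(t^2)$ subtrees are ever simultaneously live in the lists, so that both fields of an entry compress to $O(\log t)$ bits. Combining the $O(t^3 \log t)$ from $L,L'$ with the $O(t \log n)$ of path, counters and baselines yields the claimed $O(t^3 \log t + t \log n)$ bound.
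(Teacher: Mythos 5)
Your proof is correct and follows essentially the same route as the paper's: bound the recursion depth by $t$, encode root-path vertices positionally, observe that the $L,L'$ entries are overheads bounded by $|S_i|\le t$ (or $\infty$) with only $O(t^2)$ subtrees to distinguish at once, and charge $O(t^2\log t+\log n)$ per frame. The only (harmless) deviation is your justification of the $c\le t$ bound --- you prove the overhead is always at most $|S_i|$ when finite by augmenting a baseline solution, whereas the paper discards entries exceeding $|S_i|$ on the grounds that taking all of $S_i$ is explored in another branch; both yield the same $O(\log t)$-bits-per-entry accounting.
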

\begin{proof}
  There are at most $t$ recursive calls on the stack at any point. We
  will show that the space used by one is bounded by $O(t^2 \log t +
  \log n)$. Each call uses~$O(t)$ sets, all of which have size
  at most $t$. The elements contained in these sets can be represented by
  their position in the path to the root of $T$, thus they use
  at most $O(t^2 \log t)$ space. The arrays of ordered lists~$L,L'$ contain at
  most $t^2$ elements and all entries are $\leq t$ or $\infty$:
  If the additional cost (compared to the baseline cost) of 
  dominating a block~$S_i$ of the current partition from
  some subtree~$T_y$ exceeds~$|S_i| \leq t$, we can disregard this 
  possibility---it would be cheaper to just take all vertices in~$S_i$,
  a possibility explored in a different branch. 

  To find a minimal solution from the table we need to avoid using
  the same subtree to dominate more than one element of the
  partition; however, at any given moment we only need to distinguish
  at most~$t^2$ subtrees. Thus the size of the arrays $L$ and $L'$ is bounded
  by $O(t^2 \log t)$. The only other space consumption is caused by a
  constant number of variables ($result$, $baseline$, $baseline'$,
  $b$, $b'$ and $x$) all of them $\leq n$. Thus the space consumption
  of a single call is bounded by $O(t^2 \log t + \log n)$ and the
  lemma follows.
\end{proof}

\subsection{Fast \name{Dominating Set} using \boldmath$O(2^t t \log t + t
  \log n)$ space}

We have seen that it is possible to solve \name{Dominating Set} on low-treedepth
graphs in a space-efficient manner. However, we traded exponential space
against superexponential running time and it is natural to ask whether there
is some middle ground. We present Algorithm~\ref{alg:domset-2t-space} to
answer this question: its running time~$O(3^t \log t \cdot n)$ is competitive
with the default dynamic programming but its space complexity~$O(2^t \log t + t \log n)$
is exponentially better. The basic
idea is to again branch from the top deciding if the current node~$x$ is in the
dominating set or not. Intertwined in this branching we compute
a function which for a subtree $T_x$  and a set $S \subseteq P_x$
gives the cost of dominating $V(T_x) \cup S$ from $T_x$. For each recursive
call on a node we only need this function for subsets of $P_x$ which are not dominated. If~$d$ is the number of nodes of~$P_x$ that are currently
in the dominating set, the function only needs to be computed for~$2^{t-d}$ sets. This allows us to keep
the running time of $O^*(3^t)$, since $\sum_{i=0}^{t} {t \choose i} \cdot 2^{t
- i} = 3^t$, while only creating tables with at most $O(2^t)$ entries. By
representing all values in these tables as $\leq t$ offsets from a base value,
the space bound $O(2^t t \log t + t \log n)$ follows. For two
functions $M_1, M_2$ with domain~$2^U$ for some ground-set~$U$ (realized via
associative arrays in the algorithm) we use the notation~$M_1 \ast M_2$ to
denote the convolution
$
  (M_1 \ast M_2)[X] := \min_{A \uplus B = X} M_1[A] + M_2[B], 
$
for all~$X \subseteq U$.

\begin{algorithm}
  \footnotesize
  \caption{domset{\label{alg:domset-2t-space}}}
  \KwIn{A graph $G$ and a treedepth decomposition $T$ of $G$, a node
    $x$ of $T$ and a set $D \subseteq V(G)$.}  \KwOut{The size of a
    minimum Dominating Set.} \BlankLine

  $M,M_1,M_2 :=$ are empty associative arrays. If a set is not in the 
  array its value is $\infty$\;
  \If{$x$ is a leaf in $T$\label{line:start-leaf}}{
    $M[N_G[x] \setminus D] := 1$\;
    \lIf{$x \in N_G[D]$}{
      $M[\varnothing] := 0$
    }
    \Return $M$\;\label{line:end-leaf}
  }

  \tcc{Assume the children of $x$ are $\{y_1,\dots,y_l\}$.}
  \For{$i \in \{1,\dots,l\}$\label{line:start-x-notin-d}}{
    $M' := domset(G,T,y_i,D)$\;
    $M_1 := M_1 \ast M'$\;
  }
  \tcc{$x$ is not in the dominating set. 
    Discard entries where $x$ is undominated.}
  \lIf{$x \notin N_G[D]$}{
      delete all entries $S$ from $M$ where $x \notin S$\label{line:end-x-notin-d}
  }
  \For{$i \in \{1,\dots,l\}$\label{line:start-x-in-d}}{
    $M' := domset(G,T,y_i,D \cup \{x\})$\;
    $M_2 := M_2 \ast M'$\;
  }
  \lForEach{$S \in M_2$}{
    $M[S] := M[S] + 1$\label{line:end-x-in-d}
  }
  $M := M_1 \ast M_2$\;
  \tcc{Forget $x$.}
  \lForEach{$S \in M$ where $x \notin S$\label{line:start-remove-x-info}}{
    $M[S] := min(M[S],M[S \cup \{x\}])$
  }
  Delete all entries $S$ from $M$ where $x \in S$\;\label{line:end-remove-x-info}
  \lIf{$x$ is the root of $T$}{
    \Return $M[\varnothing]$
  }
  \lElse{
    \Return $M$
  }
\end{algorithm}

\begin{theorem}
  For a graph $G$ with treedepth decomposition $T$,
  Algorithm~\ref{alg:domset-2t-space} finds the size of a minimum
  dominating set in time $O(3^t \log t \cdot n)$ using $O(2^t t
  \log t + t \log n)$ bits of space.
\end{theorem}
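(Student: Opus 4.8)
The plan is to prove the theorem by establishing three claims separately—correctness, running time, and space—mirroring the structure used for Algorithm~\ref{alg:domset-polyspace}. The central object is the associative array~$M$ that a call on node~$x$ returns: I would formalize the invariant that, after a call $domset(G,T,x,D)$, the entry $M[S]$ holds the minimum number of vertices from~$V(T_x)$ needed to dominate $V(T_x) \cup S$, \emph{given} that the vertices in~$D = S' \cap P_x$ are already taken into the dominating set (and may dominate vertices of~$T_x$ from above). Here $S$ ranges over subsets of~$P_x$ that are not yet dominated by~$D$. The convolution operator~$\ast$ is exactly the tool to combine independent subtrees: since distinct children $y_i$ of~$x$ share only ancestors, a set $S \subseteq P_x$ of still-undominated ancestors can be partitioned as $S = A \uplus B$ with $A$ dominated from one group of subtrees and~$B$ from another, which is precisely what $(M_1 \ast M_2)[S] = \min_{A \uplus B = S} M_1[A] + M_2[B]$ computes.

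First I would verify the leaf case (lines~\ref{line:start-leaf}--\ref{line:end-leaf}): a leaf can either be taken (cost~$1$, dominating $N_G[x]\setminus D$ among its ancestors) or not taken (cost~$0$, only valid when $x$ is already dominated by~$D$), matching the two entries written into~$M$. For the inductive step I would trace the three phases of the algorithm. The loop at line~\ref{line:start-x-notin-d} builds~$M_1$ by convolving the children's tables under the hypothesis that~$x$ is \emph{not} in the dominating set; the deletion at line~\ref{line:end-x-notin-d} enforces that if~$x$ is undominated from above then~$x$ must appear in the requested set~$S$ (so it will be dominated from within some subtree). The loop at line~\ref{line:start-x-in-d} builds~$M_2$ under the hypothesis that~$x$ \emph{is} taken, adding~$1$ per entry for the cost of~$x$ itself. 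The final convolution $M := M_1 \ast M_2$ followed by the forget step (lines~\ref{line:start-remove-x-info}--\ref{line:end-remove-x-info}) projects away the coordinate for~$x$ by taking, for each~$S$ with $x \notin S$, the better of ``$x$ dominated from outside'' ($M[S]$) and ``$x$ required to be dominated internally'' ($M[S\cup\{x\}]$). I would argue that this correctly realizes the recurrence for the invariant, and that the root returns $M[\varnothing]$, the size of a minimum dominating set of~$G = T_r$.

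For the running time, the key counting identity is already flagged in the surrounding text: a call on a node~$x$ at depth~$d$ (with $|D \cap P_x| = d$ taken ancestors) only needs table entries indexed by subsets of the $t-d$ undominated ancestors, so its table has at most~$2^{t-d}$ entries and each convolution over a ground set of size~$t-d$ costs $\widetilde O(2^{t-d} \cdot 2^{t-d})$ by the subset-sum/covering product, or via the standard $\binom{t-d}{i}$-weighted bound. Summing the work a node contributes across all~$n$ nodes and all branching patterns of its ancestors yields $\sum_{d=0}^{t}\binom{t}{d} 2^{t-d} \cdot \poly = 3^t \cdot \poly$ by the binomial theorem, and the extra $\log t$ factor comes from arithmetic on the offset-encoded values; this gives the stated $O(3^t \log t \cdot n)$. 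For the space bound I would observe, as in the previous algorithm, that at most~$t$ calls sit on the recursion stack simultaneously, each holding a constant number of associative arrays of at most~$2^t$ entries whose values can be stored as offsets bounded by~$t$ (hence $O(\log t)$ bits each), giving $O(2^t t \log t)$ for the tables plus $O(t \log n)$ for the $O(t)$ vertex-valued scalars, and invoking the logspace tree-traversal technique of~\cite{lindell1992logspace} to keep the stack overhead within~$O(t \log n)$.

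The main obstacle I anticipate is the correctness proof of the convolution-and-forget mechanism, specifically justifying that the offset encoding of table values never loses information and that restricting each node's table to undominated-ancestor subsets is sound. The delicate point is that an entry $M[S]$ must account for vertices of~$T_x$ that are dominated ``for free'' by ancestors in~$D$ while simultaneously charging for dominating the requested external set~$S$; I must check that the $\ast$ operator composes these partial obligations without double-counting a subtree's cost and without allowing a single requested ancestor to be dominated twice. Establishing that $(M_1 \ast M_2)[S]$ faithfully realizes ``split the still-open ancestor demands~$S$ across the two child-groups'' is where the bookkeeping is most error-prone, and I would prove it by a careful exchange argument showing any optimal dominating set of $V(T_x) \cup S$ decomposes along the subtree partition exactly as the convolution assumes.
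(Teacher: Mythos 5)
Your correctness and space arguments follow essentially the same route as the paper: the same table invariant (leaf case, the two phases for $x$ taken/not taken, the final convolution and forget step), and the same offset encoding of table values to reach $O(2^t t\log t + t\log n)$ bits. The genuine problem is in your running-time analysis, and it is not cosmetic. You charge each convolution on a ground set of size $s = t-d$ at $\widetilde O(2^{s}\cdot 2^{s})$, i.e.\ the cost of the naive pairwise min-sum convolution, but then sum $\sum_{d=0}^{t}\binom{t}{d}2^{t-d}$ as if each convolution cost only $\widetilde O(2^{s})$. With your stated per-convolution cost the sum becomes $\sum_{d}\binom{t}{d}4^{t-d} = 5^{t}$ (and even the ``iterate over subsets of subsets'' implementation costs $3^{s}$ per convolution, giving $4^{t}$ overall), so the argument as written does not deliver the claimed $O(3^{t}\log t\cdot n)$.

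The missing ingredient is fast subset convolution: because every entry of a table on a ground set of size $s$ can be written as an offset of at most $s$ from $M[\varnothing]$, the min-sum convolution can be embedded into a ring convolution and computed in time $O(2^{s}\log s)$ via Bj\"orklund et al.~\cite{bjorklund2007fourier}. Only with this near-linear (in the table size) convolution does the sum $\sum_{d=0}^{t}\binom{t}{d}\,2^{t-d}\log(t-d) = O(3^{t}\log t)$ go through. This is also why the paper proves the space lemma \emph{before} the time lemma: the bounded-offset property of the table values is a prerequisite for the fast convolution, not merely a storage optimization. You should make the value bound part of what you establish first and then invoke fast subset convolution explicitly; the rest of your proposal is sound and matches the paper's proof.
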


\noindent
We divide the proof into lemmas as before.

\begin{lemma}
  Algorithm~\ref{alg:domset-2t-space} called on $G,T,r,\emptyset$,
  where~$T$ is a treedepth decomposition of~$G$ with root~$r$,
  returns the size of a minimum dominating set of $G$.
\end{lemma}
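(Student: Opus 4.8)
The plan is to prove correctness of Algorithm~\ref{alg:domset-2t-space} by induction on the treedepth decomposition, establishing a precise semantic invariant for the associative array~$M$ returned by each recursive call. Specifically, I would claim that a call $\textsf{domset}(G,T,x,D)$ returns, for each set $S \subseteq P_x$ not yet dominated by $D$, the value $M[S]$ equal to the minimum number of vertices of $V(T_x)$ needed to dominate $V(T_x) \cup S$ under the assumption that the vertices in $D = S \cap P_x$ are already in the dominating set (and that everything in $P_x \setminus (D \cup S)$ will be dominated from elsewhere). The root then reads off $M[\varnothing]$, which by the invariant is exactly the minimum dominating set size of $G$.

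\noindent
First I would handle the base case: when $x$ is a leaf, the only candidate vertex from $T_x$ is $x$ itself, so either we take $x$ (paying~$1$ and thereby dominating $N_G[x] \setminus D$, recorded as $M[N_G[x]\setminus D]:=1$) or we do not take $x$, which is only admissible when $x$ is already dominated by $D$ (recorded as $M[\varnothing]:=0$). Next I would verify the inductive step by tracking what each block of the algorithm does to the semantics of $M$. The key observation is that the convolution $\ast$ implements exactly the combination rule: dominating a union $A \uplus B$ of demands across disjoint children costs the sum of the per-child optima, minimized over all ways of splitting the demand, which is precisely $\min_{A\uplus B = X} M_1[A]+M_2[B]$. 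I would argue that $M_1$ aggregates the children's tables computed with $D$ (the branch where $x$ is \emph{not} in the dominating set) while $M_2$ aggregates them with $D\cup\{x\}$ (the branch where $x$ \emph{is} taken), and that the line incrementing $M_2[S]$ by~$1$ correctly charges for including $x$.

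\noindent
The steps that demand the most care, and which I expect to be the main obstacle, are the two pruning/merging operations near the end. The deletion on line~\ref{line:end-x-notin-d} enforces that in the ``$x$ not taken'' branch we only retain demand-sets $S$ that contain $x$, reflecting that an untaken, as-yet-undominated $x$ must be dominated from within some child's subtree, hence $x$ must appear in the combined demand. I must argue this restriction loses no valid solution and admits no invalid one. The trickier part is the ``forget $x$'' block (lines~\ref{line:start-remove-x-info}--\ref{line:end-remove-x-info}): here we eliminate $x$ from the index set by setting $M[S] := \min(M[S], M[S\cup\{x\}])$ for $x \notin S$ and then discarding all entries containing $x$. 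I would justify this by showing that from the parent's perspective $x$ has become an internal vertex of the subtree $T_x$, so its domination is now settled regardless of whether it was covered with or without being listed in the demand; taking the minimum correctly collapses the two possibilities, and since $x \notin P_{x'}$ for the parent's relevant path, indexing by supersets of $\{x\}$ is no longer meaningful. The delicate point throughout is maintaining consistency between the two branches, ensuring the $\ast$-product $M_1 \ast M_2$ is formed only after $M$ has been correctly initialized to carry the ``$x$ not taken, dominated externally'' contributions and after the increment has folded the ``$x$ taken'' cost into $M_2$; I would verify that the order of operations in the pseudocode matches the intended semantics so that no solution is double-counted and none is lost.
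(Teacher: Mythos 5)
Your proposal is correct and follows essentially the same route as the paper's own proof: the same inductive invariant that $M[S]$ for $S \subseteq P_x \setminus D$ equals the minimum cost of dominating $V(T_x) \cup S$ from within $T_x$ given $D$, the same base case, and the same block-by-block justification of the two branches, the deletion enforcing domination of an untaken $x$, and the final forget-$x$ collapse. The delicate points you flag (the pruning on line~\ref{line:end-x-notin-d} and the merge/forget steps) are treated at the same level of detail in the paper, so no further comparison is needed.
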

\begin{proof}
  Notice that the associative array $M$ represents a function which
  maps subsets of $P_x \setminus D$ to integers and $\infty$. At the
  end of any recursive call, $M[S]$ for $S \subseteq P_x \setminus D$
  should be the size of a minimal dominating set in $T_x$ which
  dominates $T_x$ and $S$ assuming that the nodes in $D$ are part of
  the dominating set. We will prove this inductively.

  Assume $x$ is a leaf. We can always take $x$ into the
  dominating set at cost one. In case $x$ is already dominated we have
  the option of not taking it, dominating nothing at zero cost. This
  is exactly what is computed in lines
  \ref{line:start-leaf}--\ref{line:end-leaf}.

  Assume now that $x$ is an internal, non-root node of $T$. First, in lines
  \ref{line:start-x-notin-d}--\ref{line:end-x-notin-d} we assume that
  $x$ is not in the dominating set. By inductive assumption calling
  $domset$ on a child $y$ of $x$ returns a table which contains the
  cost of dominating $T_y$ and some set $S \subseteq P_y \setminus
  D$. By merging them all together $M_1$ represents a function which
  gives the cost of dominating some set $S \subseteq (P_x \cup \{x\})
  \setminus D$ and all subtrees rooted at children of $x$. We just
  need to take care that $x$ is dominated. If $x$ is not dominated by
  a node in $D$, then it must be dominated from one of the
  subtrees. Thus we are only allowed to retain solutions which
  dominate $x$ from the subtrees. We take care of this on
  line~\ref{line:end-x-notin-d}. After this $M_1$ represents a
  function which gives the cost of dominating some set $S \subseteq
  (P_x \cup \{x\}) \setminus D$ and $T_x$ assuming $x$ is not in the
  dominating set. Then we compute a solution assuming $x$ is in the
  dominating set in lines
  \ref{line:start-x-in-d}--\ref{line:end-x-in-d}. We first merge the
  results on calls to the children of $x$ via convolution. Since we
  took $x$ into the dominating set we increase the cost of all entries
  by one. After this $M_2$ represents the function which gives the
  cost of dominating some set $S \subseteq P_x \setminus D$ and $T_x$
  assuming $x$ is in the dominating set. We finally merge $M_1$ and
  $M_2$ together with the min-sum convolution.  Since we have taken
  care that all solutions represented by entries in $M$ dominate $x$
  we can remove all information about $x$. We do this in lines
  \ref{line:start-remove-x-info}--\ref{line:end-remove-x-info}. Finally,
  $M$ represents the desired function and we return it. When $x$ is
  the root, instead of returning the table we return the value for the
  only entry in $M$, which is precisely the size of a minimum
  dominating set of $G$.\looseness-1
\end{proof}

To prove the running time of Algorithm~\ref{alg:domset-2t-space} we will need
the values $M$ to be all smaller or equal to the depth of $T$. Thus we first
proof the space upper bound. In the following we treat the
associative arrays $M$, $M_1$ and $M_2$ as if the entries where values
between $0$ and $n$. We will show that we can represent all values as
an offset $\leq t$ of a single single value between $0$ and $n$.

\begin{lemma}\label{lemma:2t-space}
  Algorithm~\ref{alg:domset-2t-space} uses $O(2^t t \log t + t
  \log n)$ bits of space.
\end{lemma}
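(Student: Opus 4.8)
The plan is to factor the space bound into the number of simultaneously live recursion frames and the space used by a single frame. Each recursive call on a child of~$x$ descends exactly one level of~$T$, and~$T$ has height at most~$t$, so at most~$t$ frames sit on the call stack at any time. It therefore suffices to charge a single frame~$O(2^t\log t)$ bits for its tables plus~$O(\log n)$ bits for a handful of full-precision values; summing over the~$\le t$ frames gives~$O(2^t t\log t + t\log n)$. Within one frame at node~$x$, every associative array (the arrays~$M$, $M_1$, $M_2$, and a returned table~$M'$) has as its domain the subsets of~$(P_x\cup\{x\})\setminus D$, a set of size at most~$t$; hence each table holds at most~$2^t$ entries and only a constant number of tables are live at once. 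I would index the tables directly by a~$t$-bit mask over the at most~$t$ ancestors of~$x$ so that keys are implicit, and likewise store the argument~$D$ as such a bitmask (costing~$O(t)$ bits per frame and recovering the actual ancestor vertices by walking up~$T$ on demand, so that no frame spends~$\log n$ bits per stored vertex).

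The decisive step is to store each table in~$O(2^t\log t + \log n)$ bits instead of the naive~$O(2^t\log n)$. To this end I would keep one base value in~$\{0,\dots,n\}$ per table, using~$O(\log n)$ bits, together with an offset of~$O(\log t)$ bits for every subset~$S$ in the domain. This encoding is justified by the following \emph{bounded-spread} claim: in every table the algorithm constructs, all finite entries lie in an additive window of width at most~$t$, and the minimum among them is realized by the entry for~$\varnothing$ whenever it is present. For the table returned at~$x$, the entry~$M[\varnothing]$ is finite (taking all of~$V(T_x)$ dominates~$V(T_x)$) and lower-bounds every~$M[S]$, since any solution counted by~$M[S]$ already dominates~$V(T_x)$. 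Conversely, if~$M[S]$ is finite then each vertex of~$S$ is dominated from~$V(T_x)\cup D$, so starting from an optimal solution for~$\varnothing$ one dominates the at most~$|S|\le t$ vertices of~$S$ by adding at most one vertex of~$V(T_x)$ per undominated vertex, giving~$M[S]\le M[\varnothing]+t$. The same reasoning applies to~$M_1$ and~$M_2$, which represent the analogous cost of dominating the union of the already-processed subtrees of~$x$.

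It then remains to check that the width-$t$ window survives every operation and that the required base value is always cheaply available. The min-sum convolution~$M_1\ast M_2$ again represents a domination-cost function of the same kind, so its spread is again at most~$t$; moreover its~$\varnothing$-entry equals~$M_1[\varnothing]+M_2[\varnothing]$, which I can compute directly and use as the new base. The remaining manipulations---deleting the entries with~$x\notin S$ when~$x$ is undominated, incrementing all entries by one when~$x$ is taken, and the final forget step---either restrict the domain or shift all values uniformly, and hence preserve a window of width at most~$t$ (after a deletion I would simply rescan the survivors to recover the minimum as the new base). Crucially, each convolution entry is computed with a constant number of~$O(\log n)$-bit scratch registers and immediately written back as an~$O(\log t)$-bit offset, so no table is ever materialized in full-precision form.

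I expect the main obstacle to be the bounded-spread claim itself, specifically verifying it uniformly for \emph{every} intermediate table---the partial convolutions~$M_1,M_2$, the pruned tables, and the forgotten table---rather than only for the clean semantic object returned at each node; the combinatorial content is the ``add at most~$|S|$ vertices'' argument, which must be phrased so that it remains valid when only a subtree-prefix of~$T_x$ has been processed and when some entries are~$\infty$. A secondary, more bookkeeping-flavoured obstacle is confirming that the base values, the bitmasks for~$D$, the node~$x$, and the~$O(\log n)$-bit scalars together contribute only~$O(t\log n)$ across all frames, so that they are absorbed into the stated bound.
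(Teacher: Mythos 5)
Your proposal is correct and follows essentially the same route as the paper's proof: bound the recursion depth by~$t$, note that each frame holds a constant number of tables with at most~$2^t$ entries, and store each table as one~$O(\log n)$-bit base value plus~$O(\log t)$-bit offsets, justified by the observation that all entries lie within an additive window of width~$t$ above~$M[\varnothing]$. The paper merely asserts this bounded-spread property, whereas you supply the (correct) argument for it and check that it survives the intermediate convolutions and deletions; this is a welcome elaboration, not a different approach.
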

\begin{proof}
  Let $t$ be the depth of the provided treedepth decomposition. It is
  clear that the depth of the recursion is at most $t$. Any call to
  the function keeps a constant number of associative arrays and nodes
  of the graph in memory. By construction these associative arrays
  have at most $2^t$ entries. For any of the computed arrays $M$ the
  value of $M[\varnothing]$ and $M[S]$ for any $S \neq \varnothing$
  can only differ by at most $t$. We can thus represent every entry
  for such a set $S$ as an offset from $M[\varnothing]$ and use $O(2^t
  \log t + \log n)$ space for the tables. This together with the bound
  on the recursion depth gives the bound $O(2^t t \log t + t \log n)$.
\end{proof}

\begin{lemma}
  Algorithm~\ref{alg:domset-2t-space} runs in time $O(3^t \log t \cdot
  n)$.
\end{lemma}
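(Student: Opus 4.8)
The plan is to bound the total work by showing that each node of the decomposition contributes $O^*(3^t)$ time across all recursive calls, and then multiply by the number of nodes.

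The plan is to charge the running time to the edges of the decomposition and to cash in the identity $\sum_{i=0}^{t}\binom{t}{i}2^{t-i}=3^t$ already highlighted in the preamble to the algorithm. The first step is to pin down the shape of the recursion. Since a call on $(x,D)$ spawns, for each child $y$ of $x$, exactly the two calls on $(y,D)$ and $(y,D\cup\{x\})$, a straightforward induction along the root-to-$x$ path shows that every pair $(x,D)$ with $D\subseteq P_x$ is generated exactly once; hence node $x$ receives precisely $2^{|P_x|}$ calls. As argued for correctness, the table built in the call $(x,D)$ is indexed only by subsets of the \emph{undominated} ancestors $P_x\setminus D$, so it has at most $2^{|P_x|-|D|}$ entries. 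The leaf case I would dispatch separately: each leaf call does $O(t)$ work, and since $\sum_x 2^{|P_x|}\le 2^t n$, the leaf and bookkeeping overhead contributes only $O(2^t\cdot t\cdot n)$, which is dominated by the bound we are after.

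Next I would isolate the non-recursive work of an internal call, which is entirely dominated by the convolutions $M_1\ast M'$ and $M_2\ast M'$ taken over the children. The crucial claim is that each such min-sum subset convolution over a ground set of size $m=|P_x\setminus D|+1$ can be computed in time $O(2^{m}\log t)$, that is, linear in the table size up to the $\log t$ factor needed to handle the stored values, rather than the naive $O(3^m)$. Granting this, I would charge the two convolutions performed for a fixed child $y$ inside the call $(x,D)$ to the tree edge $xy$, each costing $O\big(2^{|P_x\setminus D|}\log t\big)$.

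The accounting is then two lines. Fixing an edge $xy$ and summing over all $D\subseteq P_x$ gives $\sum_{d=0}^{|P_x|}\binom{|P_x|}{d}\,O\big(2^{|P_x|-d}\log t\big)=O\big(3^{|P_x|}\log t\big)\le O\big(3^t\log t\big)$, which is exactly the quoted identity at work. Summing this over the $n-1$ edges of $T$ yields the total $O(3^t\log t\cdot n)$. The $\log t$ factor per value operation is precisely what Lemma~\ref{lemma:2t-space} buys us: every table value is an offset of magnitude at most $t$ from a common base, so each stored number fits in $O(\log t)$ bits and each elementary comparison or addition on it costs $O(\log t)$ — this is the one place where the running-time argument leans on the previously established space bound, as the text anticipates.

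The main obstacle is proving the convolution bound. A naive evaluation of $\min_{A\uplus B=X}M_1[A]+M_2[B]$ spends $\Theta(3^m)$ time and, fed into the edge sum above, degrades the total to $\Theta(4^t)$, so a near-linear convolution is not optional. The delicate point is that fast subset convolution is natively a sum–product (ring) operation, whereas here the semiring is tropical ($\min,+$). I would resolve this again through Lemma~\ref{lemma:2t-space}: since all finite values lie in $\{0,\dots,t\}$, one can lift each value $v$ to the monomial $z^{v}$, so that $\min$ becomes "lowest surviving degree" and $+$ becomes polynomial multiplication, and then run the standard ranked zeta/Möbius-transform subset convolution over this bounded-degree polynomial ring. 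The part that needs the most care — and where I expect to spend the bulk of the write-up — is verifying that the ranked transform together with the bounded polynomial arithmetic stays within the advertised $\log t$ overhead per table entry, rather than leaking a larger $\mathrm{poly}(t)$ factor; it is exactly the value bound of at most $t$ that keeps this overhead under control.
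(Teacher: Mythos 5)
Your proposal follows essentially the same route as the paper's proof, which consists of exactly the accounting you perform --- tables of size $2^{t-d}$ when $d$ ancestors lie in the dominating set, entries bounded by $t-d$ via the preceding space lemma, and the per-node sum $\sum_{i=0}^{t}\binom{t}{i}2^{t-i}\log(t-i)=O(3^t\log t)$ --- except that the paper discharges the min-sum subset convolution bound with a bare citation to Bj\"orklund et al.\ rather than the ranked zeta/M\"obius lifting you sketch. The single step you flag as delicate, namely keeping the tropical convolution at $O(2^m\log t)$ instead of incurring the $\mathrm{poly}(m)\cdot t$ overhead that the degree-$t$ polynomial embedding naturally produces, is precisely the step the paper does not elaborate, so your instinct that the real work hides there is not a deviation from the paper but a more candid account of it.
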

\begin{proof}
  On a call on which $d$ nodes of $P_x$ are in the dominating set the
  associative arrays have at most $2^s$ entries for $s = t - d$. As
  shown above the entries in the arrays are $\leq s$ (except one).
  Hence, we can use fast subset convolution to merge the arrays in
  time $O(2^s \log s)$~\cite{bjorklund2007fourier}. It follows that
  the total running time is bounded by $O\big(n \cdot \sum_{i=0}^{t}
  {t \choose i} \cdot 2^{t - i} \log (t - i) \big) = O(3^t \log t
  \cdot n)$.
\end{proof}


\section{Conclusion and future work}\label{sec:conclusion}

We have shown that single-pass dynamic programming on treedepth, tree or
path decompositions without preprocessing of the input 
must use space exponential in the width/depth,
confirming a common suspicion and proving it rigorously for the first
time. This complements previous SETH-based arguments about the
running time of arbitrary algorithms on low treewidth graphs.
We further demonstrate that treedepth allows non-DP algorithms
that use only polynomial space in the depth of the provided
decomposition. Both our lower bounds and the provided algorithm
for \name{Dominating Set} appear as if they could be special cases of
a general theory to be developed in future work and we further ask
whether our result can be extended to less stringent definitions
of `dynamic programming algorithms'.
\looseness-1

Despite the less-than-ideal theoretical bounds of the
presented \name{Dominating Set} algorithms, the opportunities for
heuristic improvements are not to be slighted. Take the pure
branching algorithm presented in Section~\ref{sec:branching}.
During the branching procedure, we generate all partitions from the
root-path starting at the current vertex. However, we actually only
have to partition those vertices that are not dominated yet (by virtue
of being themselves in the dominating set or being dominated by another
vertex on the root-path). A sensible heuristic as to which branch---including
the current vertex in the dominating set or not---to explore first,
together with a \emph{branch \&{} bound} routine should keep us from
generating partitions of very large sets.
A similar logic applies to the mixed dynamic
programming/branching algorithm since the tables only have to contain
information about sets that are not yet dominated. The tables could thus
be kept a lot smaller than their theoretical bounds indicate.

Furthermore, it seems reasonable that in practical settings, the nodes
near the root of treedepth decompositions are more likely to be part of a
minimal dominating set. If this is true, computing a treedepth decomposition
would serve as a form of smart preprocessing for the branching, a rough `plan
of attack', if you will. How much such a \emph{guided branching} improves upon
known branching algorithms in practice is an interesting avenue for further
research.\looseness-1


It is still an open question whether \name{Dominating Set} can
be solved in time $(3-\epsilon)^t \cdot \poly(n)$ when
parameterized by treedepth. Our lower bound result implies that if
such an algorithm exists, it cannot be purely dynamic programming.


\medskip
{\small
\noindent\textbf{Acknowledgements}
Research funded by DFG-Project RO 927/13-1 ``Pragmatic Parameterized
Algorithms''.
Li-Hsuan Chen was supported in part by MOST-DAAD Sandwich Program under grant no. 103-2911-I-194-506.
}


\bibliography{./biblio}

\end{document}